\newcommand{\pdcj}{\textsc{Sorting by Prefix DCJs}}
\newcommand{\spdcj}{\textsc{Sorting by Signed Prefix DCJs}}
\newcommand{\updcj}{\textsc{Sorting by Unsigned Prefix DCJs}}
\newcommand{\upr}{\textsc{Sorting by Unsigned Prefix Reversals}}
\newcommand{\spr}{\textsc{Sorting by Signed Prefix Reversals}}
\newtheorem{observation}{Observation}
\newcommand{\decisionproblem}[3]{
\begin{center} 
\noindent\fbox{\parbox{0.97\textwidth}{
\begin{minipage}[t]{1\linewidth}
\textsc{#1}

Input: {#2}

Question: {#3}
\end{minipage}
}
}    
\end{center}
}
\tikzstyle{vertex} = [circle,fill=black!0,minimum size=4pt,inner sep=1pt]
\newcommand{\gettikzxy}[3]{\tikz@scan@one@point\pgfutil@firstofone#1\relax
  \edef#2{\the\pgf@x}\edef#3{\the\pgf@y}}
\newcounter{permcounter}
\NewDocumentCommand\permutationvertices{>{\SplitList{;}}m}
{
    \node[circle,fill=black,draw,inner sep=1pt] (0) at (0, 0) [label=below:0] {};
    \setcounter{permcounter}{0}
    \ProcessList{#1}{ \insertvertex }
    \stepcounter{permcounter}
    \node[circle,fill=black,draw,inner sep=1pt] (\thepermcounter) at (\thepermcounter*1.5, 0) [label=below:\thepermcounter] {};
}
\newcommand\insertvertex[1]{
    \stepcounter{permcounter};
    \node[circle,fill=black,draw,inner sep=1pt] (#1) at (\thepermcounter*1.5, 0) [label=below:#1] {};
}
\newcommand{\breakpointgraph}[1]{
\permutationvertices{#1}
\foreach [count=\q from 0] \p in {1, ..., \thepermcounter}
        \draw (\p*1.5, 0) -- (\q*1.5, 0);
\foreach [count=\q from 0] \p in {1, ..., \thepermcounter} {
        \gettikzxy{(\p)}{\px}{\py}
        \gettikzxy{(\q)}{\qx}{\qy}    
        \ifdimcomp{\qx}{<}{\px}{
            \draw[gray] (\q) to [bend left=45] (\p);
        }{
            \draw[gray] (\q) to [bend right=45] (\p);
        }
    }
}
\title{Sorting Genomes by Prefix Double-Cut-and-Joins}
\author{Guillaume Fertin\inst{1}\orcidID{0000-0002-8251-2012} \and
Géraldine Jean\inst{1}\orcidID{0000-0002-1534-2682} \and
Anthony Labarre\inst{2}\orcidID{0000-0002-9945-6774}}
\institute{Nantes Université, CNRS, LS2N, UMR 6004, F-44000 Nantes, France \and
LIGM, CNRS, Université Gustave Eiffel, F-77454 Marne-la-Vallée, France
\email{\{guillaume.fertin,geraldine.jean\}@univ-nantes.fr, anthony.labarre@univ-eiffel.fr}
}
\authorrunning{G. Fertin, G. Jean and A. Labarre}
\begin{document}

\maketitle

\begin{abstract}
In this paper, we study the problem of sorting unichromosomal linear genomes by prefix double-cut-and-joins (or DCJs) in both the signed and the unsigned settings.
Prefix DCJs cut the leftmost segment of a genome and any other segment, and recombine the severed endpoints in one of two possible ways: one of these options corresponds to a prefix reversal, which reverses the order of elements between the two cuts (as well as their signs in the signed case). 
Depending on whether we consider both options or reversals only, our main results are:
(1) 
new structural lower bounds based on the breakpoint graph for sorting by 
unsigned prefix reversals, unsigned prefix DCJs, or signed prefix DCJs; (2) a polynomial-time algorithm for sorting by signed prefix DCJs, 
thus answering an open question in~\cite{labarre-sbpbi}; 
(3) 
a 3/2-approximation for sorting by unsigned prefix DCJs, which is, to the best of our knowledge, the first sorting by {\em prefix} rearrangements problem that admits an approximation ratio strictly smaller than 2 (with the obvious exception of the polynomial-time solvable problems);
and finally, 
(4) an \FPT{} algorithm for sorting by unsigned prefix DCJs parameterised by the number of breakpoints in the genome.

\keywords{Genome Rearrangements \and Prefix Reversals \and Prefix DCJs \and Lower Bounds \and Algorithmics \and \FPT \and Approximation algorithms.}
\end{abstract}

\section{Introduction}
Genome rearrangements is a classical paradigm to study evolution between species. The rationale is to consider species by observing their genomes, which are usually represented as ordered sets of elements (the genes) that can be signed (according to gene strand when known). A genome can then evolve by changing the order of its genes, through operations called \emph{rearrangements}, which can be generally described as cutting the genome at different locations, thus forming segments, and rearranging these segments in a different fashion. 
Given two genomes, a \emph{sorting scenario} is a sequence of rearrangements transforming the first genome into the other. The length of a shortest such sequence of rearrangements is called \emph{the rearrangement distance}. Several specific rearrangements such as reversals, translocations, fissions, fusions, transpositions, and block-interchanges have been defined, and the rearrangement distance together with its corresponding sorting problem have been widely studied either by considering one unique type of rearrangement or by allowing the combination of some of them~\cite{fertin:hal-00416453}. The \emph{double-cut-and-join} (or DCJ) operation introduced by Yancopoulos \emph{et al.}~\cite{Yancopoulos2005} encompasses all the rearrangements mentioned above: it consists in cutting the genome in two different places and joining the four extremities in any possible way. A DCJ is a \emph{prefix DCJ} whenever one cut is applied to the leftmost position of the genome. The prefix restriction can be applied 
to other rearrangements such as \emph{prefix reversals}, which prefix DCJs generalise. Whereas the computational complexity of the sorting problems by unrestricted rearrangements has been thoroughly studied and pretty well characterised, there is still a lot of work to do to understand the corresponding prefix sorting problems (see Table~1 in~\cite{labarre-sbpbi} for a summary of existing results). 
Our interest in prefix rearrangements is therefore mostly theoretical: techniques that apply in the unrestricted setting do not directly apply under the prefix restriction, and new approaches are therefore needed to make progress on algorithmic issues and complexity aspects. Since DCJs generalise several other operations, we hope that the insight we gain through their study will shed light on other prefix rearrangement problems.

In this paper, we study the problem of \pdcj{} and, for the sake of simplicity, we consider the case where the source and the target genomes are unichromosomal and linear. This implies that genomes can be seen as (signed) permutations (depending on whether the gene orientation is known or not). Moreover, prefix DCJs applied to such genomes allow to exactly mimick three kinds of rearrangement: (i) a \emph{prefix reversal} when the segment between the two cuts is reversed; (ii) a  \emph{cycle extraction} when the extremities of the segment between the two cuts are joined; (iii) a \emph{cycle reincorporation} when the cut occurs in a cycle 
and the resulting linear segment is reincorporated at the beginning of the genome where the leftmost cut occurs.

Based on the study of the \emph{breakpoint graph}, we first show new structural lower bounds for the problems \updcj{} and \spdcj. Since prefix reversals are particular cases of prefix DCJs, we can extend this result to \upr{} (it has been already shown for \spr{}  in~\cite{LabarreC11}). Thanks to these preliminary results, we are able to answer an open question from~\cite{labarre-sbpbi} by proving that \spdcj{} is in \P~just like the unrestricted case~\cite{Yancopoulos2005}. However, while sorting by unsigned DCJs is \NP-hard~\cite{Chen2013}, the computational complexity of the prefix-constrained version of this problem is still unknown. We provide two additional results: a 3/2-approximation algorithm, which is, to the best of our knowledge, the first sorting by {\em prefix} rearrangements problem that admits an approximation ratio strictly smaller than 2 (with the obvious exception of the polynomial-time solvable problems); and an \FPT{} algorithm parameterised by the number of breakpoints in the genome. Due to space constraints, some of the proofs are deferred
to the Appendix.

\subsection{Permutations, genomes, and rearrangements}

We begin with the simplest models for representing organisms.

\begin{definition}
A \emph{(unsigned) permutation} of $[n]=\{1, 2, \ldots, n\}$ is a bijective application of $[n]$ onto itself. 
A \emph{signed permutation} of $\{\pm 1,\pm 2,\ldots,\pm n\}$ is a bijective application of $\{\pm 1,\pm 2,\ldots,\pm n\}$ onto itself that satisfies $\pi_{-i}=-\pi_i$. The \emph{identity permutation} is the permutation $\iota=(1\ 2\ \cdots\ n)$.
\end{definition}

We study transformations based on the following well-known operation.

\begin{definition}
A \emph{reversal} $\rho(i, j)$ with $1\leq i<j\leq n$ is a permutation that reverses the order of elements between positions $i$ and $j$:
$$
\rho(i, j)=\left(
\begin{array}{l}\renewcommand{\arraystretch}{1.3}
1\ \cdots\ i-1\ \underline{i\ \ i+1\ \cdots\ j-1\ j}\ j+1\ \cdots\ n\\
\raisebox{-.05in}{$1\ \cdots\ i-1\ j\ j-1\ \cdots\ i+1\ \ i\ j+1\ \cdots\ n$}
\end{array}
\right).
$$ 
A \emph{signed reversal} $\overline{\rho}(i, j)$ with $1\leq i\leq j\leq n$ is a signed permutation that reverses both the order and the signs of elements between positions $i$ and $j$:
$$
\overline{\rho}(i, j)=\left(
\begin{array}{ccc}\renewcommand{\arraystretch}{1.3}
1\ \cdots\ i-1 & \underline{i\ \ \ \ \ i+1\ \ \ \ \ \ \cdots\ \ \ \ \ j-1\ \ \ \ \ j} & j+1\ \cdots\ n\\
1\ \cdots\ i-1 & -j\ -(j-1)\ \cdots\ -(i+1)\ -i     & j+1\ \cdots\ n
\end{array}
\right).
$$
If $i=1$, then $\rho(i, j)$ (resp. $\overline{\rho}(i, j)$) is called a \emph{prefix (signed) reversal}.
\end{definition}

A reversal $\rho$ applied to a permutation $\pi$ transforms it into another permutation $\sigma=\pi\rho$. When the distinction matters, we mention whether objects or  transformations are signed or unsigned; otherwise, we omit those qualifiers  to lighten the presentation. The following model is a straightforward generalisation of unsigned permutations.

\begin{definition}\label{def:genome}
A \emph{genome} $G$ is a collection of vertex-disjoint paths and cycles over $\{0, 1, 2, \ldots, n+1\}$. 
It is \emph{linear} if it consists of a single path with endpoints $0$ and $n+1$. The \emph{identity genome} is the path induced by 
the sequence 
$(0, 1, 2, \ldots, n+1)$.
\end{definition}

Let us note that a genome may 
contain loops or parallel edges (see \autoref{fig:example-genomes-dcj}).

\begin{figure}[htbp]
\centering
$G$

\begin{tikzpicture}[scale=.6]
\foreach [count=\i] \v in {0,1,2,4,3,6,5} {
    \node[circle,fill=black,draw,inner sep=1pt] (\v) at (\i, 0) [label=below:\v] {};
}

\draw (0) to node[thick,cross out,draw]  {} (1) -- (2) to node[thick,cross out,draw] {} (4) -- (3) -- (6); 
\draw  (5)  to [loop above,looseness=20,out=120,in=60] (5);

\begin{scope}[xshift=-120pt,yshift=-80pt] \draw [>=stealth,->] (6, 1.5) -- (5, .5);
\foreach [count=\i] \v in {0,2,1,4,3,6,5} {
    \node[circle,fill=black,draw,inner sep=1pt] (\v) at (\i, 0) [label=below:\v] {};
}
\node at (4, -1.5) {$G_1$};
\draw[ultra thick] (0) -- (2);
\draw[ultra thick] (1) -- (4);
\draw (2) -- (1);
\draw (4) -- (3) -- (6);
\draw  (5)  to [loop above,looseness=20,out=120,in=60] (5);
\end{scope}

\begin{scope}[xshift=120pt,yshift=-80pt] \draw [>=stealth,->] (2, 1.5) -- (3, .5);
\foreach [count=\i] \v in {0,4,3,6,1,2,5} {
    \node[circle,fill=black,draw,inner sep=1pt] (\v) at (\i, 0) [label=below:\v] {};
}
\draw[ultra thick] (0) -- (4);
\draw (4) -- (3) -- (6);
\draw[ultra thick] (1) to[bend left=45] (2);
\draw (2) to[bend left=45] (1);
\draw  (5)  to [loop above,looseness=20,out=120,in=60] (5);
\node at (4, -1.5) {$G_2$};
\end{scope}

\end{tikzpicture}
\caption{Cutting edges $\{0, 1\}$ and $\{2,4\}$ from the nonlinear genome $G$ produces genome $G_1$ with a reversed segment, if we add edges $\{0, 2\}$ and $\{1,4\}$, or genome $G_2$ with an extracted cycle if we add $\{0, 4\}$ and $\{1, 2\}$ instead.}
\label{fig:example-genomes-dcj}
\end{figure}
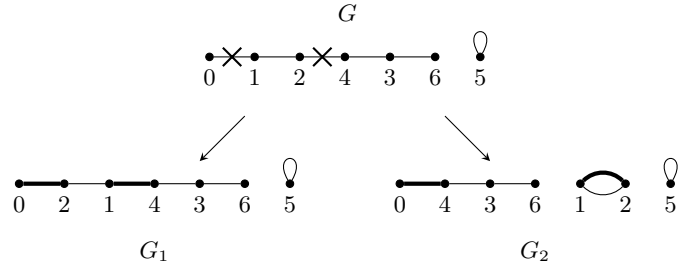

\begin{definition}
Let $e=\{u, v\}$ be an edge of a genome $G$. Then $e$ is a \emph{breakpoint} if $0\notin e$ and either $|u-v|\neq 1$, or $e$ has multiplicity two. Otherwise, $e$ is an \emph{adjacency}. The number of breakpoints of $G$ is denoted by $b(G)$.
\end{definition}

For instance, the genome with edge set $\{\{0, 4\}$, $\{4, 3\}$, $\underline{\{3, 6\}}$, $\{1, 2\}$, $\underline{\{2, 1\}}$, $\underline{\{5, 5\}}\}$ has three breakpoints (underlined). Note that permutations can be viewed as linear genomes using the following simple transformation: given a permutation $\pi$, extend it by adding two new elements $\pi_0=0$  and $\pi_{n+1}=n+1$, and build the linear genome $G_\pi$ with edge set $\{\{\pi_i, \pi_{i+1}\}\ |\ 0\le i\le n\}$. This allows us to use the notion of breakpoints on permutations as well, with the understanding that they apply to the extended permutation, and therefore $b(\pi)=b(G_\pi)$.

A reversal can be thought of as an operation that ``cuts'' (i.e., removes) two edges from a genome, then ``joins'' the severed endpoints (by adding two new edges) in such a way that the segment between the cuts is now reversed (see $G_1$ in \autoref{fig:example-genomes-dcj}). The following operation builds on that view to generalise reversals.

\begin{definition}\label{def:dcj}
\cite{Yancopoulos2005} 
Let $e=\{u, v\}\neq f=\{w, x\}$ be two edges of a genome $G$. The \emph{double-cut-and-join (or DCJ for short)} $\delta(e, f)$ applied to $G$ transforms $G$ into a genome $G'$ by replacing edges $e$ and $f$ with either $\{\{u, w\}, \{v, x\}\}$ or $\{\{u, x\}, \{v, w\}\}$. $\delta$ is a \emph{prefix DCJ} if either $0\in e$ or $0\in f$.
\end{definition}

DCJs that do not correspond to reversals extract paths from genomes and turn them into cycles (see $G_2$ in \autoref{fig:example-genomes-dcj}). Signed permutations can be generalised to signed genomes as well. The definition of a signed linear 
genome is more complicated than in the unsigned case, and is based on the following notion.

\begin{definition}\label{def:unsigned-translation}
Let $\pi$ be a signed permutation. The \emph{unsigned translation} of $\pi$ is the unsigned permutation $\pi'$ obtained by mapping $\pi_i$ onto the sequence $(2\pi_i-1,2\pi_i)$ if $\pi_i>0$, or $(2|\pi_i|,2|\pi_i|-1)$ if $\pi_i<0$, for $1\leq i\leq n$; and 
adding two new elements $\pi'_0=0$ and $\pi'_{2n+1}=2n+1$.
\end{definition}

\begin{definition}\label{def:signed-genomes}
A \emph{signed genome} $G$ is a perfect matching over the set $\{0, 1, 2, \ldots$, $2n+1\}$. $G$ 
is \emph{linear} if there exists a signed permutation $\pi$ such that $E(G)=\{\{\pi'_{2i}, \pi'_{2i+1}\}\ |\ 0\le i\le n\}$. 
The \emph{signed identity genome} is the perfect matching $\{\{2i, 2i+1\}\ |\ 0\le i\le n\}$. \end{definition}

DCJs immediately generalise to signed genomes: they may cut any pair of edges of the perfect matching, and recombine their endpoints in one of two ways.

Finally, we will be using different kinds of graphs in this work with a common notation. The \emph{length} of a cycle in a graph $G$ is the number of elements\footnote{The definition of an element will depend on the graph structure and will be explicitly stressed.} it contains, and a $k$-cycle is a cycle of length $k$: it is \emph{trivial} if $k=1$, and \emph{nontrivial} otherwise. 
We let $c(G)$ (resp. $c_1(G)$) denote the number of cycles (resp. 1-cycles) in $G$.

\subsection{Problems}\label{sec:problems}

We  study several specialised versions of the following problem. A \emph{configuration} is a permutation or  a genome, and the \emph{identity configuration} is the identity permutation or genome, depending on the type of the initial configuration.

\decisionproblem{sorting by $\Omega$}{
a configuration $G$, a number $K\in\mathbb{N}$, and a set $\Omega$ of allowed operations.}{
is there a sequence of at most $K$ operations from $\Omega$ that transforms $G$ into the identity configuration?
}

Specific choices for $\Omega$ and the model chosen for $G$ yield the following variants:\begin{itemize}
    \item \updcj{}, where $G$ is a linear genome and $\Omega$ is the set of all prefix DCJs;
    \item \spdcj{}, where $G$ is a signed linear genome and $\Omega$ is the set of all 
prefix DCJs;
    \item \upr{}, where $G$ is an unsigned permutation and $\Omega$ is the set of all prefix reversals;
    \item \spr{}, where $G$ is a signed permutation and $\Omega$ is the set of all prefix signed reversals.
\end{itemize}
 
We refer to the smallest number of operations needed to transform $G$ into the identity configuration as the \emph{$\Omega$-distance} of $G$. A specific distance is associated to each of the above problems; we use the following notation:

\begin{itemize}
    \item $pdcj(G)$ for the prefix DCJ distance of an unsigned genome $G$, and $psdcj(G)$ for its signed version;
    \item $prd(\pi)$ for the prefix reversal distance of an unsigned permutation $\pi$, and $psrd(\pi)$ for its signed version.
\end{itemize}

\section{A Generic Lower Bounding Technique}\label{sec:lb}

We present in this section a lower bounding technique which applies to both the signed and the unsigned models, and on which we will build in subsequent sections to obtain exact or approximation algorithms.

\subsection{The Signed Case}

We generalise a lower bounding technique introduced in the context of \spr{}~\cite{LabarreC11}.  It is based on the following structure.

\begin{definition}
\cite{Bafna1996} 
Given a signed permutation $\pi$, 
let $\pi'$ be its unsigned translation. 
The \emph{breakpoint graph} of $\pi$ is the undirected edge-bicoloured graph $BG(\pi)$ with ordered vertex set $(\pi'_0=0,\pi'_1, \pi'_2, \ldots, \pi'_{2n},\pi'_{2n+1}=2n+1)$ and whose edge set consists of:
\begin{itemize}
\item black edges $\{\pi'_{2i}, \pi'_{2i+1}\}$ for $0\leq i\leq n$;
\item grey edges $\{\pi'_{2i}, \pi'_{2i}+1\}$ for $0\leq i\leq n$.
\end{itemize}
\end{definition}

See \autoref{fig:breakpoint-graph-signed} for an example. Following \autoref{def:signed-genomes}, the \emph{breakpoint graph of a signed linear genome} is simply the union of that genome (which plays the role of black edges) and of the signed identity genome (which plays the role of grey edges). Breakpoint graphs are 2-regular and as such are the union of disjoint cycles whose edges alternate between both colours, thereby referred to as \emph{alternating cycles}. Black edges play the role of elements in that graph, so the \emph{length} of a cycle in a breakpoint graph is the number of black edges it contains. 

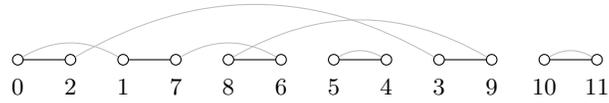
\begin{figure}[htbp]
\centering
\begin{tikzpicture}[scale=.7]
\foreach [count=\i] \color in {black,black,black,black,black,black}
        \draw[color=\color] (2*\i-2, 0) -- (2*\i-1, 0);
\foreach [count=\i from 0] \name in {0,2,1,7,8,6,5,4,3,9,10,11}
    \node[vertex] (\name) at (\i,0) [draw,circle] 
    [label=below:$\strut\name$]
    {};

\foreach \u/\v in {0/1, 2/3, 5/4, 7/6, 8/9, 10/11}
        \draw[color=gray!50] (\u) to [bend left] (\v);
\end{tikzpicture}
\caption{The breakpoint graph $BG( \pi)$ of $\pi = -1$ 4 $-3$ $-2$ 5.}
\label{fig:breakpoint-graph-signed}
\end{figure}

To bound the prefix DCJ distance, we use a connection between the effect of a DCJ on the breakpoint graph and the effect of \emph{algebraic transpositions}, or \emph{exchanges}, on the classical cycles of a permutation.

\begin{definition}\label{def:exchange}
An \emph{exchange} $\varepsilon(i, j)$ with $1\leq i<j\leq n$ is a permutation that swaps elements in positions $i$ and $j$:
$$
\varepsilon(i, j)=\left(
\begin{array}{l}\renewcommand{\arraystretch}{1.3}
1\ \cdots\ i-1\ \fbox{$i$}\ i+1\ \cdots\ j-1\ \fbox{$j$}\ j+1\ \cdots\ n\\
\raisebox{-.05in}{$1\ \cdots\ i-1\ \fbox{$j$}\ i+1\ \cdots\ j-1\ \fbox{$i$}\ j+1\ \cdots\ n$}
\end{array}
\right).
$$
If $i=1$, then $\varepsilon(i, j)$ is called a \emph{prefix exchange}.
\end{definition}

We let $\Gamma(\pi)$ denote the (directed) graph of a permutation $\pi$, with vertex set $[n]$ and  which contains an arc $(i,j)$ whenever $\pi_i=j$.
Exchanges act on two elements that belong either to the same cycle in $\Gamma(\pi)$ or to two different cycles, and therefore  $|c(\Gamma(\pi))-c(\Gamma(\pi\varepsilon(i, j)))|\le 1$. The following result allows the computation of the prefix exchange distance $ped(\pi)$ in polynomial time, and will be useful to our purposes.

\begin{theorem}\label{thm:formula-for-ped}
\cite{akers-star} For any unsigned permutation $\pi$, 
we have
$$ped(\pi)=
n+c(\Gamma(\pi))-2c_1(\Gamma(\pi))-\left\{
\begin{array}{ll}
0 & \mbox{if } \pi_1= 1, \\
2 & \mbox{otherwise}.
\end{array}
\right.$$
\end{theorem}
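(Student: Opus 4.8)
The plan is to treat a prefix exchange as right multiplication by a transposition and to run a potential-function argument proving matching lower and upper bounds. First I would rewrite the claimed quantity in a more transparent form: setting $m=n-c_1(\Gamma(\pi))$ (the number of non-fixed points of $\pi$) and $c_{\geq 2}=c(\Gamma(\pi))-c_1(\Gamma(\pi))$ (the number of nontrivial cycles), the right-hand side becomes $\Phi(\pi):=m+c_{\geq 2}-\delta(\pi)$, where $\delta(\pi)=2$ if $\pi_1\neq 1$ and $\delta(\pi)=0$ otherwise. One checks immediately that $\Phi(\iota)=0$ and that $\Phi(\pi)\geq 0$ with equality only for $\iota$: if $\pi_1\neq 1$ then the cycle through position $1$ is nontrivial, forcing $m\geq 2$ and $c_{\geq 2}\geq 1$, which dominates $\delta$. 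It then suffices to show (i) every prefix exchange changes $\Phi$ by exactly $\pm 1$, giving $ped(\pi)\geq\Phi(\pi)$, and (ii) whenever $\pi\neq\iota$ some prefix exchange decreases $\Phi$ by $1$, giving $ped(\pi)\leq\Phi(\pi)$.

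For the upper bound I would exhibit an explicit greedy move. If $\pi_1=j\neq 1$, apply $\varepsilon(1,j)$: this places $j$ at position $j$, creating a fresh fixed point there while keeping position $1$ displaced (unless $1$ lay in a $2$-cycle, in which case two fixed points appear and position $1$ becomes fixed as well); in either case $\Phi$ drops by $1$. If instead $\pi_1=1$ but $\pi\neq\iota$, pick any $j$ lying in a nontrivial cycle and apply $\varepsilon(1,j)$: this merges the fixed point at position $1$ into that cycle and, by a short computation, $\Phi$ again drops by $1$. Iterating reaches $\iota$ in exactly $\Phi(\pi)$ steps.

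The lower bound is where the real work lies, and it reduces to a finite case analysis once two structural observations are in place. Writing $\pi'=\pi\varepsilon(1,j)$, only the out-arcs of vertices $1$ and $j$ in $\Gamma$ change, so $c_1$ can vary only through the fixed-point status of positions $1$ and $j$, while $\delta$ depends on position $1$ alone; moreover, as already noted in the excerpt, $c(\Gamma)$ changes by exactly $\pm 1$ according to whether $1$ and $j$ lie in the same cycle (a split) or in different cycles (a merge). The two key observations are: in a split, neither position $1$ nor position $j$ can have been a fixed point beforehand, since a trivial cycle cannot contain both $1$ and $j\neq 1$, so $\Delta c_1\geq 0$; dually, in a merge neither position can become a fixed point afterward, since $\pi_j=1$ or $\pi_1=j$ would already place $1$ and $j$ in a common cycle, so $\Delta c_1\leq 0$. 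Feeding these constraints into $\Delta\Phi=\Delta c-2\,\Delta c_1-\Delta\delta$ and enumerating the few remaining subcases (according to which of positions $1,j$ are fixed and whether position $1$ changes status) shows $\Delta\Phi\in\{-1,+1\}$ in every case. Since $\Phi(\iota)=0$, any sorting sequence must use at least $\Phi(\pi)$ operations.

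The main obstacle is bookkeeping rather than conceptual: keeping the interaction between the fixed-point count $c_1$ and the correction term $\delta$ straight, since a move that looks locally productive—for instance bringing the correct element to position $1$ when $\pi_1\neq 1$—can actually \emph{increase} $\Phi$ by creating an expensive cycle that no longer contains position $1$. The two observations above are precisely what prevent such pathologies from beating the $\pm 1$ bound, and I would isolate them as an explicit lemma before grinding through the subcases.
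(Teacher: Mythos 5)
The paper contains no proof of this statement for you to be compared against: \autoref{thm:formula-for-ped} is imported as a known result of Akers and Krishnamurthy (\cite{akers-star}), and the citation is the paper's entire treatment of it. So your proposal can only be judged on its own merits, and on those it is correct; it is in essence the classical potential-function argument for the star-graph (prefix-exchange) distance.

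Concretely, your normal form $\Phi(\pi)=m+c_{\geq 2}-\delta(\pi)$ matches the right-hand side, $\Phi(\iota)=0$, and $\Phi(\pi)\geq 1$ for $\pi\neq\iota$ as you argue. Your two structural observations are true and are exactly what closes the case analysis: writing $\sigma=\pi\varepsilon(1,j)$, in a split ($1$ and $j$ in the same cycle of $\Gamma(\pi)$) neither position $1$ nor position $j$ is fixed beforehand, so $\Delta c_1\geq 0$; in a merge one has $\pi_1\neq j$ and $\pi_j\neq 1$, hence $\sigma_1=\pi_j\neq 1$ and $\sigma_j=\pi_1\neq j$, so $\Delta c_1\leq 0$. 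Running the eight resulting subcases through $\Delta\Phi=\Delta c-2\Delta c_1-\Delta\delta$ does give $\Delta\Phi\in\{-1,+1\}$ every time: split with $\pi_j=1,\pi_1=j$ gives $1-4+2=-1$; split with $\pi_j=1$ only gives $1-2+2=+1$ (this is precisely the pathology you flag, where placing element $1$ at position $1$ too early increases the potential); split with $\pi_1=j$ only gives $1-2-0=-1$; generic split gives $+1$; merge with both positions fixed gives $-1+4-2=+1$; merge with only $\pi_1=1$ gives $-1+2-2=-1$; merge with only $\pi_j=j$ gives $-1+2-0=+1$; generic merge gives $-1$. Your two greedy moves land in the subcases with $\Delta\Phi=-1$, and the positivity of $\Phi$ off the identity guarantees the recursion terminates at $\iota$, so both bounds hold and the formula follows. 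Two minor remarks: for the lower bound you only need $\lvert\Delta\Phi\rvert\leq 1$, so ``exactly $\pm 1$'' is stronger than necessary (though true, since multiplying by a transposition changes the cycle count by exactly one — a sharper fact than the inequality $\lvert c(\Gamma(\pi))-c(\Gamma(\pi\varepsilon(i,j)))\rvert\leq 1$ stated in the paper); and your parenthetical ``unless $1$ lay in a $2$-cycle'' case in the greedy step is indeed the only exception, as the computation above confirms.
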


\begin{theorem}\label{thm:lower-bound-on-psdcj}
For any signed linear genome $G$, we have
\begin{eqnarray}\label{eqn:lower-bound-on-psdcj}
psdcj(G)\ge
n+1+c(BG(G))-2c_1(BG(G))-\left\{
\begin{array}{ll}
0 & \mbox{if } \{0, 1\}\in G, \\
2 & \mbox{otherwise}.
\end{array}
\right.
\end{eqnarray}
\end{theorem}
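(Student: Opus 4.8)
The plan is to recognise the right-hand side of the claimed inequality as a potential function and to show that no single prefix DCJ can decrease it by more than one. Write $\phi(G)$ for that right-hand side, so that $\phi(G)=n+1+c(BG(G))-2c_1(BG(G))$, with an extra $-2$ exactly when $\{0,1\}\notin G$. For the identity genome all $n+1$ black edges coincide with their grey partners, so $c=c_1=n+1$ and $\{0,1\}$ is present, giving $\phi=0$, while trivially $psdcj=0$ there. Hence, once we know that every prefix DCJ $\delta$ satisfies $\phi(G)-\phi(G\delta)\le 1$, telescoping along any sorting sequence of length $k$ forces $k\ge\phi(G)$, which is the claim. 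This is the DCJ counterpart of \autoref{thm:formula-for-ped}: $\phi$ is precisely the prefix-exchange potential of a permutation of size $n+1$ whose cycle graph mirrors $BG(G)$, and the per-move bound says a prefix DCJ acts on the cycle structure no more efficiently than one prefix exchange.

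To obtain that per-move inequality I would write $\phi(G)-\phi(G\delta)=-\Delta c+2\Delta c_1-2\Delta\mathbb{1}$, where $\Delta c$, $\Delta c_1$ and $\Delta\mathbb{1}$ record the change in the number of alternating cycles, the number of $1$-cycles, and the indicator of $\{0,1\}\in G$. A prefix DCJ always cuts the unique black edge incident to vertex $0$ together with one other black edge, so only those two edges and the cycles containing them are affected. I would then invoke the standard trichotomy for the action of a DCJ on the alternating cycles of a breakpoint graph: if the two cut edges lie in the same cycle, the two reconnections yield either a split ($\Delta c=+1$) or a segment reversal that keeps the cycle connected ($\Delta c=0$); if they lie in distinct cycles, both reconnections merge them ($\Delta c=-1$). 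It then suffices to verify $-\Delta c+2\Delta c_1-2\Delta\mathbb{1}\le 1$ in each case.

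The cases are controlled by two elementary facts that tie $c_1$ and $\mathbb{1}$ together through vertex $0$. First, the only grey edge incident to $0$ is $\{0,1\}$, so a black edge meeting $0$ can form a $1$-cycle only if it equals $\{0,1\}$; thus $\{0,1\}\in G$ is equivalent to the statement that the black edge at $0$ is a $1$-cycle, which couples $\mathbb{1}$ to the $1$-cycle at $0$. Second, since $BG(G)$ is $2$-regular the two cut edges are vertex-disjoint, and turning a newly formed black edge into a $1$-cycle necessarily detaches it, hence forces a split. With these in hand, the reversal case gives $\Delta c_1=\Delta\mathbb{1}=0$ (one cycle of length $\ge 2$ is preserved); the split case gives $\Delta c_1-\Delta\mathbb{1}\le 1$, because any newly created $1$-cycle incident to $0$ must be $\{0,1\}$ and so raises $\mathbb{1}$ in lockstep; and the merge case gives $\Delta c_1\le\Delta\mathbb{1}$, because merging creates no $1$-cycle while destroying the $\{0,1\}$ $1$-cycle lowers $\mathbb{1}$ simultaneously. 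In every case $\phi(G)-\phi(G\delta)\le 1$.

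The main obstacle, and the only place needing genuine care, is exactly this coupling of $\Delta c_1$ and $\Delta\mathbb{1}$: because the prefix constraint forces the edge at $0$ to be cut at every step, one must check that each $1$-cycle created or destroyed at vertex $0$ is matched by the corresponding move of the indicator, so that the two possible jumps of size $2$ in $2\Delta c_1$ and $-2\Delta\mathbb{1}$ never combine to overshoot $1$. Equivalently, this is what lets a prefix DCJ be charged to at most one prefix exchange on the associated size-$(n+1)$ permutation $\sigma$ with $c(\Gamma(\sigma))=c(BG(G))$, $c_1(\Gamma(\sigma))=c_1(BG(G))$, and first element fixed iff $\{0,1\}\in G$, after which \autoref{thm:formula-for-ped} delivers the bound. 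Since prefix reversals are particular prefix DCJs, the same potential argument will then specialise to the unsigned reversal setting, recovering and extending the statement of \cite{LabarreC11}.
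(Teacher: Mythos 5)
Your proof is correct and follows essentially the same route as the paper: you treat the right-hand side as the prefix-exchange potential of \autoref{thm:formula-for-ped} and show that no single prefix DCJ can decrease it by more than one, which is exactly the paper's argument. The paper compresses this into the remark that a DCJ changes the number of cycles of $BG(G)$ by at most one and that the analogy with prefix exchanges ``is preserved under the prefix constraint,'' whereas you make that analogy rigorous through the explicit split/merge/reversal case analysis coupling $\Delta c_1$ with the indicator of $\{0,1\}\in G$ --- a filling-in of details the paper leaves implicit, not a different argument.
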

\begin{proof}
As observed in \cite{Yancopoulos2005}, a DCJ  acts on at most two cycles of $BG(G)$ and can therefore change the number of cycles by at most one. This analogy with the effect of exchanges on the cycles of a permutation is preserved under the prefix constraint, and the lower bound then follows from \autoref{thm:formula-for-ped}.
\qed\end{proof}

Since (prefix) signed reversals are a subset of (prefix) signed DCJs, the result below from \cite{LabarreC11} is a simple corollary of \autoref{thm:lower-bound-on-psdcj}.

\begin{theorem}\label{thm:lower-bound-on-psrd}
\cite{LabarreC11} 
For any signed permutation $\pi$, 
we have
\begin{eqnarray}\label{eqn:lower-bound-on-psrd}
 psrd(\pi)\geq
n+1+c(BG(\pi))-2c_1(BG(\pi))-\left\{
\begin{array}{ll}
0 & \mbox{if } \pi_1= 1, \\
2 & \mbox{otherwise}.
\end{array}
\right.
\end{eqnarray}
\end{theorem}

\subsection{The Unsigned Case}

We now show that our lower bounds apply to the unsigned setting as well. The definition of the breakpoint graph in the unsigned case is slightly different, but the definition of the length of a cycle remains unchanged.

\begin{definition}
\cite{Bafna1996} 
The \emph{unsigned breakpoint graph} of an unsigned permutation $\pi$ is the undirected edge-bicoloured graph $U\negthinspace BG(\pi)$ with ordered vertex set $(\pi_0=0,\pi_1, \pi_2, \ldots, \pi_{n},\pi_{n+1}=n+1)$ and whose edge set consists of:
\begin{itemize}
    \item black edges $\{\pi_{i}, \pi_{i+1}\}$ for $0\leq i\leq n$;
    \item grey edges $\{\pi_{i}, \pi_{i}+1\}$ for $0\leq i\leq n$.
\end{itemize}
\end{definition}

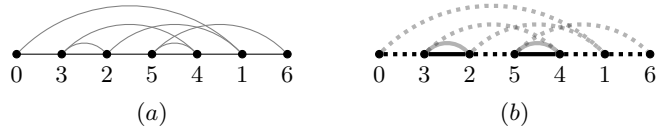
\begin{figure}
    \setlength{\tabcolsep}{12pt}
    \centering
\begin{tabular}{cc}
\begin{tikzpicture}[scale=.4]
\breakpointgraph{3;2;5;4;1}
\end{tikzpicture}
&
\begin{tikzpicture}[scale=.4]
\permutationvertices{3;2;5;4;1}
\foreach \u/\v in {3/2, 5/4} {
    \draw[ultra thick] (\u) -- (\v);
    \draw[ultra thick, opacity=.3] (\u) to [bend left=45] (\v);
}
\foreach \u/\v in {0/3,2/5,4/1,1/6} {
    \draw[ultra thick, dotted] (\u) -- (\v);
}
\foreach \u/\v in {0/1,3/4,2/1,5/6} {
    \draw[ultra thick, dotted, opacity=.3] (\u) to [bend left=45] (\v);
}
\end{tikzpicture}
\\
$(a)$ & $(b)$ \end{tabular}
\caption{$(a)$ The unsigned breakpoint graph $U\negthinspace BG(\pi)$ of $\pi =$ 3 2 5 4 1; $(b)$ an optimal decomposition of $U\negthinspace BG(\pi)$ into two trivial cycles (thick) and one 4-cycle (dotted).}
    \label{fig:ubg}
\end{figure}

\autoref{fig:ubg}$(a)$ shows an example of an unsigned breakpoint graph. Following \autoref{def:genome}, the \emph{breakpoint graph of an unsigned linear genome} is simply the union of that genome (which plays the role of black edges) and of the identity genome (which plays the role of grey edges). Vertices $0$ and $n+1$ in the unsigned breakpoint graph have degree 2, and all other vertices have degree 4. The unsigned breakpoint graph also decomposes into alternating cycles, but the decomposition is no longer unique. 
For any genome $G$ and an arbitrary decomposition $\mathscr{D}$ of  $U\negthinspace BG(G)$, 
let $c^\mathscr{D}$ (resp. $c^\mathscr{D}_1$) 
denote the number of cycles (resp. trivial cycles) of $U\negthinspace BG(G)$ in $\mathscr{D}$. We call $\mathscr{D}$  \emph{optimal} if it minimises 
$c^\mathscr{D}-2c^\mathscr{D}_1$ 
(see \autoref{fig:ubg}$(b)$). The following result characterises optimal decompositions (see Appendix for the proof).

\begin{lemma}\label{lemma:optimal-decompositions}
Let $G$ be a genome and $\mathscr{D}$ be a decomposition of $U\negthinspace BG(G)$. Then $\mathscr{D}$ is optimal iff it maximises the number of trivial cycles and minimises the number of nontrivial cycles. \end{lemma}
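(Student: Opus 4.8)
The plan is to rewrite the objective so that it separates the two quantities we want to control. Since every trivial cycle contributes $1$ to $c^\mathscr{D}$ and $1$ to $c^\mathscr{D}_1$, while every nontrivial cycle contributes $1$ to $c^\mathscr{D}$ and $0$ to $c^\mathscr{D}_1$, we have $c^\mathscr{D}-2c^\mathscr{D}_1=(c^\mathscr{D}-c^\mathscr{D}_1)-c^\mathscr{D}_1$, that is, the objective equals the number of nontrivial cycles minus the number of trivial cycles. The implication $(\Leftarrow)$ is then immediate: if $\mathscr{D}$ simultaneously maximises $c^\mathscr{D}_1$ and minimises the number of nontrivial cycles, then every decomposition $\mathscr{D}'$ satisfies $c^{\mathscr{D}'}_1\le c^\mathscr{D}_1$ and $c^{\mathscr{D}'}-c^{\mathscr{D}'}_1\ge c^\mathscr{D}-c^\mathscr{D}_1$, so $\mathscr{D}$ minimises the difference. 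For the converse I would reduce everything to a single \emph{simultaneity} statement: there exists a decomposition that at once maximises the number of trivial cycles and minimises the number of nontrivial cycles. Granting this, let $t^\ast$ and $N^\ast$ denote these optimal values; every decomposition obeys $(\text{nontrivial})-(\text{trivial})\ge N^\ast-t^\ast$, the simultaneous decomposition attains this value, and an optimal $\mathscr{D}$ must therefore also attain it. Combined with $c^\mathscr{D}_1\le t^\ast$ and $(\text{nontrivial count of }\mathscr{D})\ge N^\ast$, this forces both equalities, giving the converse. The simultaneity statement is thus the heart of the proof.

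To establish it I would first identify the trivial cycles. A trivial cycle consists of one black edge together with a parallel grey edge, so it corresponds exactly to a black edge that is an adjacency, since the black edge $\{u,u+1\}$ is parallel to the grey edge $\{u,u+1\}$. Hence $c^\mathscr{D}_1$ is at most the number $\alpha$ of adjacency black edges, for every $\mathscr{D}$. Moreover this bound is attained: at each vertex $u$ the two incident grey edges are $\{u-1,u\}$ and $\{u,u+1\}$, and each incident adjacency black edge ($\{u-1,u\}$ or $\{u,u+1\}$) is parallel to a \emph{distinct} one of them. So the rule ``pair every adjacency black edge with its parallel grey edge'' creates no conflict at any vertex and extends to a global decomposition with $c^\mathscr{D}_1=\alpha$; thus the maximum number of trivial cycles is exactly $\alpha$.

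The crux is to show this maximum can be reached without paying anything in the number of nontrivial cycles. For this I would analyse a local \emph{extraction move}: if an adjacency (black edge $e$ with parallel grey edge $g$) is not already a trivial cycle, re-pair $e$ with $g$ at both of their common endpoints. This move increases $c^\mathscr{D}_1$ by at least one, and the point to verify is that it never increases the number of nontrivial cycles. Two cases arise. If $e$ and $g$ lie in two different (necessarily nontrivial) cycles, then deleting $e$ and $g$ leaves two paths that the re-pairing rejoins at the shared endpoints into a single cycle, so the nontrivial count drops by one. If $e$ and $g$ lie in the same cycle $C$, then splitting off the trivial cycle $\{e,g\}$ leaves a single cycle on the remaining $\ell(C)-1$ black edges, which is nontrivial when $\ell(C)\ge 3$ (count unchanged) and itself trivial when $\ell(C)=2$ (count drops by one). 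In every case the number of nontrivial cycles does not increase, and since the move only re-pairs at the two endpoints of $e$, it preserves any trivial cycle already present there.

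Starting from a decomposition that minimises the number of nontrivial cycles and repeatedly applying the extraction move to the adjacencies that are not yet trivial cycles, I obtain after finitely many steps a decomposition with $c^\mathscr{D}_1=\alpha$ whose nontrivial count never rose above the minimum, hence equals it; this is precisely the sought simultaneous optimum, which completes the proof. I expect the main obstacle to be the case analysis in the extraction move, namely checking that fusing or splitting behaves as claimed on the alternating cycle structure — in particular the boundary case $\ell(C)=2$, where the ``remainder'' is itself trivial. The remaining ingredients, the reformulation of the objective and the sandwiching argument, are routine once the simultaneity statement is in hand.
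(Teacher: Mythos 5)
Your rewriting of the objective and your $(\Leftarrow)$ direction are correct (and cleaner than the paper's arithmetic case analysis), and the sandwich argument does correctly reduce $(\Rightarrow)$ to the simultaneity claim. The gap is in the extraction move, in the same-cycle case: the assertion that splitting off $\{e,g\}$ from $C$ ``leaves a single cycle on the remaining $\ell(C)-1$ black edges'' is false. Removing $e$ and $g$ from $C$ leaves two segments, and what happens depends on the relative orientation in which $C$ traverses the two parallel edges. If $C$ traverses both $e$ and $g$ from $u$ to $u+1$, the two segments are paths from $u+1$ to $u$ and they rejoin into one alternating cycle, as you claim. But if $C$ traverses them in opposite directions, the two segments are closed alternating walks, one based at $u+1$ and one at $u$; each is an alternating cycle on its own, and when they are vertex-disjoint nothing can glue them back together. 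In that case the move replaces one nontrivial cycle by a trivial cycle plus \emph{two} cycles, and the nontrivial count can increase by one --- exactly what your invariant forbids.

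This is not a repairable local slip, because the simultaneity statement --- which you rightly identify as the heart of the proof --- is false in general. Take $\pi = 1\ 3\ 2\ 4\ 5\ 7\ 6\ 8$ (so $n=8$), whose black edges are $\{0,1\},\{1,3\},\{3,2\},\{2,4\},\{4,5\},\{5,7\},\{7,6\},\{6,8\},\{8,9\}$. There are five parallel black--grey pairs, namely $\{0,1\},\{2,3\},\{4,5\},\{6,7\},\{8,9\}$, so $t^{\ast}=5$; and the breakpoint graph is connected, so $N^{\ast}=1$. Yet extracting all five pairs leaves the two vertex-disjoint $2$-cycles on $\{1,2,3,4\}$ and on $\{5,6,7,8\}$, so every decomposition with five trivial cycles has at least two nontrivial ones: no decomposition attains $(N^{\ast},t^{\ast})=(1,5)$. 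Your process therefore must break down on this input: starting from the single-Eulerian-cycle decomposition (which minimises the nontrivial count) it ends with all five pairs extracted, hence with two nontrivial cycles, so the nontrivial count rose at some extraction --- necessarily one of the bad opposite-orientation splits above. Worse, the example shows that the statement itself fails under this (natural) reading: the decomposition that extracts every pair except $\{4,5\}$ and keeps the (still connected) remainder as one Eulerian cycle, and the decomposition that extracts all five pairs, both achieve the minimum value $c^{\mathscr{D}}-2c^{\mathscr{D}}_1=-3$, yet the former does not maximise trivial cycles and the latter does not minimise nontrivial cycles. For what it is worth, you are in good company: the paper's own proof of the forward direction rests on the same tacit assumption that extracting a trivial cycle from a nontrivial one raises the cycle count by exactly one, so the step you would need to fix is also a flaw in the published argument.
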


As a result, we obtain the following lower bound on the prefix DCJ distance, where $c^*(U\negthinspace BG(G))$ and $c^*_1(U\negthinspace BG(G))$ denote, respectively, the number of cycles and the number of $1$-cycles in an optimal decomposition of $U\negthinspace BG(G)$.

\begin{theorem}\label{thm:lower-bound-on-pdcj}
For any genome $G$, 
we have
\begin{eqnarray}\label{eqn:lower-bound-on-pdcj}
pdcj(G)
 \geq
n+1+c^*(U\negthinspace BG(G))-2c^*_1(U\negthinspace BG(G))-\left\{
\begin{array}{ll}
0 & \mbox{if } \{0, 1\} \in G, \\
2 & \mbox{otherwise}.\end{array}
\right.
\end{eqnarray}
\end{theorem}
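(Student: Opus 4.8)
The plan is to derive the unsigned bound from its signed counterpart (\autoref{thm:lower-bound-on-psdcj}) through the correspondence between cycle decompositions of $U\negthinspace BG(G)$ and \emph{signings} of $G$. Resolving each degree-$4$ vertex of $U\negthinspace BG(G)$ into two alternating black--grey transitions is exactly the same as orienting the corresponding element, so every decomposition $\mathscr{D}$ of $U\negthinspace BG(G)$ is precisely the breakpoint graph $BG(\hat{G})$ of some signed genome $\hat{G}$ whose unsigned translation is $G$, and vice versa. The crux of the proof is therefore a \emph{lifting lemma}: any sorting of $G$ by unsigned prefix DCJs can be converted into a sorting of some signing $\hat{G}$ of $G$ by signed prefix DCJs of the same length. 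Granting this, an optimal unsigned sorting produces a signing $\hat{G}$ with $psdcj(\hat{G})\le pdcj(G)$.

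To establish the lift I would argue in the doubled representation, where each element of $G$ splits into two extremities. An unsigned prefix DCJ cuts the black edge incident to vertex $0$ together with one further black edge and recombines the four freed endpoints; the very same pair of black edges can be cut in the doubled genome, and since vertex $0$ remains a genome endpoint the prefix constraint is preserved. Of the two signed recombinations available at such a cut, exactly one reproduces the chosen unsigned operation once signs are forgotten: the sign-flipping reversal in the reversal case, and the cycle-creating (resp.\ cycle-reabsorbing) join in the extraction (resp.\ reincorporation) case. Running the given sorting backwards from the signed identity genome and applying, at each step, the signed prefix DCJ selected in this way determines a signed genome $\hat{G}$ together with a signed sorting of $\hat{G}$ of the same length, which proves the lemma.

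It then remains to chain three inequalities. Applying \autoref{thm:lower-bound-on-psdcj} to $\hat{G}$ gives $psdcj(\hat{G})\ge n+1+c(BG(\hat{G}))-2c_1(BG(\hat{G}))-\varepsilon(\hat{G})$, where $\varepsilon(H)\in\{0,2\}$ equals $0$ precisely when $\{0,1\}\in H$. Since $BG(\hat{G})$ is one decomposition of $U\negthinspace BG(G)$ while $c^*,c^*_1$ are realised by an optimal one, the very definition of optimality (cf.\ \autoref{lemma:optimal-decompositions}) yields $c(BG(\hat{G}))-2c_1(BG(\hat{G}))\ge c^*(U\negthinspace BG(G))-2c^*_1(U\negthinspace BG(G))$. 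Combining these with $psdcj(\hat{G})\le pdcj(G)$ gives $pdcj(G)\ge n+1+c^*(U\negthinspace BG(G))-2c^*_1(U\negthinspace BG(G))-\varepsilon(\hat{G})$, so it suffices to check that $\varepsilon(\hat{G})\le\varepsilon(G)$ in order to recover exactly \eqref{eqn:lower-bound-on-pdcj}. This is immediate whenever $\varepsilon(\hat{G})=0$ or $\{0,1\}\notin G$; the only case requiring attention is $\{0,1\}\in G$ together with a lift that orients element $1$ negatively, which I expect to settle by exploiting that the leading edge $\{0,1\}$ is an adjacency and choosing the orientation of the first element to preserve it, so that $\{0,1\}\in\hat{G}$ and $\varepsilon(\hat{G})=0=\varepsilon(G)$.

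The main obstacle is the lifting lemma, and within it two points deserve genuine care: verifying that the ``matching'' signed recombination is always realisable as a true \emph{prefix} signed DCJ, and that the backward construction is globally consistent, so that the sign repeatedly toggled on an element by the successive reversals acting on it is well defined at the end. The correction term $\varepsilon$ in the case $\{0,1\}\in G$ is the most delicate part, since prefix operations necessarily disturb position $1$; resolving it amounts to showing that the leading adjacency can always be kept by a suitable lift. Once these orientation-consistency issues are dealt with, all cycle-count bookkeeping is inherited directly from the signed theorem and no further computation is needed.
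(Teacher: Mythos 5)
Your route is genuinely different from the paper's: the paper never passes through signed genomes at all, but argues directly that a prefix DCJ acts on the cycles of a decomposition of $U\negthinspace BG(G)$ the way a prefix exchange acts on the cycles of $\Gamma(\pi)$, and then combines \autoref{thm:formula-for-ped} with \autoref{lemma:optimal-decompositions}. Much of your alternative machinery is sound: the bijection between decompositions of $U\negthinspace BG(G)$ and signings of $G$ is correct (the two black--grey pairings at a degree-$4$ vertex are exactly the two orientations of that element); the lifting lemma is correct, since running the scenario backwards from the signed identity leaves no choice at any step (each unsigned edge corresponds to a unique matching edge, and each of the four severed endpoints has a unique freed extremity), so a length-$k$ unsigned scenario yields a signing $\hat{G}$ of $G$ with $psdcj(\hat{G})\le pdcj(G)$; and minimality of an optimal decomposition indeed gives $c(BG(\hat{G}))-2c_1(BG(\hat{G}))\ge c^*(U\negthinspace BG(G))-2c^*_1(U\negthinspace BG(G))$.

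The gap is exactly the step you flag as ``most delicate'', namely $\varepsilon(\hat{G})\le\varepsilon(G)$, and it is fatal rather than delicate. First, your construction leaves no room to ``choose the orientation of the first element'': as you set it up, $\hat{G}$ is completely determined by the unsigned scenario, because the lift must terminate at the signed identity. Second, the desired property can provably fail. Let $G$ be the linear genome of $\pi=1\ 4\ 3\ 2\ 5$ (so $n=5$ and $\{0,1\}\in G$). The prefix DCJ cutting $\{0,1\},\{2,5\}$ and rejoining $\{0,5\},\{1,2\}$, followed by the one cutting $\{0,5\},\{1,4\}$ and rejoining $\{0,1\},\{4,5\}$, sorts $G$; hence $pdcj(G)\le 2$. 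The backward lift of this scenario is forced to end at $\hat{G}=(-1,-4,-3,-2,+5)$, which does not contain $\{0,1\}$; and no lift of any length-$2$ scenario could contain $\{0,1\}$, because any signing $\hat{H}$ of $G$ with $\{0,1\}\in\hat{H}$ satisfies $psdcj(\hat{H})\ge n+1+c^*(U\negthinspace BG(G))-2c^*_1(U\negthinspace BG(G))=5+1+5-2\cdot 4=3>2$ by \autoref{thm:lower-bound-on-psdcj} and minimality. Worse, no alternative argument can close the gap, because the same instance contradicts \autoref{eqn:lower-bound-on-pdcj} itself: the optimal decomposition of $U\negthinspace BG(G)$ consists of the four trivial cycles on $\{0,1\},\{2,3\},\{3,4\},\{5,6\}$ plus one $2$-cycle, so $c^*=5$, $c^*_1=4$, and the right-hand side equals $3$ while $pdcj(G)\le 2$. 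The paper's own brief proof breaks at the same point: in the unsigned graph, vertex $1$ lies on \emph{two} grey edges, so cutting the black edge $\{0,1\}$ can simultaneously create the new trivial cycle $\{1,2\}$, and the first DCJ above decreases the quantity $n+1+c^*-2c^*_1-\varepsilon$ by two --- something that cannot happen for prefix exchanges, which is precisely the analogy the paper invokes. In short, your suspicion was pointed at the right place: the case $\{0,1\}\in G$, $\{1,2\}\notin G$ is not a technicality to be smoothed over; it is where both your proof and the stated theorem fail.
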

\begin{proof}
Follows from the fact that 
DCJs 
affect the number of cycles in a decomposition by at most one, \autoref{thm:formula-for-ped}, and \autoref{lemma:optimal-decompositions}.
\qed\end{proof}

As an immediate corollary, the above lower bound is also a lower bound on $prd(\pi)$, since (prefix) reversals are a subset of (prefix) DCJs.

\begin{corollary}\label{thm:lower-bound-on-prd}
For any unsigned permutation $\pi$, 
we have
\begin{eqnarray}\label{eqn:lower-bound-on-prd}
 prd(\pi)
\geq
n+1+c^*(U\negthinspace BG(\pi))-2c^*_1(U\negthinspace BG(\pi))-\left\{
\begin{array}{ll}
0 & \mbox{if } \pi_1= 1, \\
2 & \mbox{otherwise}.\end{array}
\right.
\end{eqnarray}
\end{corollary}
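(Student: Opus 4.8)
The plan is to derive this bound directly from \autoref{thm:lower-bound-on-pdcj} by viewing $\pi$ as a linear genome and exploiting the fact that prefix reversals form a subclass of prefix DCJs. First I would recall the transformation described after \autoref{def:genome}: the unsigned permutation $\pi$ gives rise to the linear genome $G_\pi$ with edge set $\{\{\pi_i, \pi_{i+1}\}\ |\ 0 \le i \le n\}$. By construction the black edges of $U\negthinspace BG(G_\pi)$ are exactly the edges $\{\pi_i, \pi_{i+1}\}$ and its grey edges are the $\{\pi_i, \pi_i + 1\}$, so $U\negthinspace BG(G_\pi)$ and $U\negthinspace BG(\pi)$ coincide; in particular $c^*$ and $c^*_1$ take the same values on both objects. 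Moreover $\{0,1\} \in G_\pi$ holds precisely when $\pi_1 = 1$, which matches the two cases of the additive constant in both bounds.

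Next I would show that any sorting sequence for $\pi$ by prefix reversals translates, operation by operation, into a sorting sequence for $G_\pi$ by prefix DCJs of the same length. A prefix reversal $\rho(1, j)$ cuts the two edges $\{\pi_0, \pi_1\}$ and $\{\pi_j, \pi_{j+1}\}$ of the current genome---the first of which contains the endpoint $0$---and rejoins the four severed endpoints so that the segment between the cuts is reversed. This is exactly the first of the two recombination options in \autoref{def:dcj} (the option illustrated by $G_1$ in \autoref{fig:example-genomes-dcj}), applied to a pair of edges one of which meets $0$; hence each prefix reversal is a prefix DCJ, and the identity permutation maps to the identity genome. Consequently every sequence of prefix reversals sorting $\pi$ is also a valid sequence of prefix DCJs sorting $G_\pi$, which yields $prd(\pi) \ge pdcj(G_\pi)$.

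Combining these two observations finishes the argument: applying \autoref{thm:lower-bound-on-pdcj} to $G = G_\pi$ and substituting $U\negthinspace BG(G_\pi) = U\negthinspace BG(\pi)$ together with the equivalence $\{0,1\} \in G_\pi \Leftrightarrow \pi_1 = 1$ gives exactly the claimed inequality for $prd(\pi)$. I do not anticipate a genuine obstacle here, since all the real content already lives in \autoref{thm:lower-bound-on-pdcj}; the corollary merely invokes the containment of prefix reversals within prefix DCJs. The only points requiring care are to confirm that the reversal recombination is indeed one of the two DCJ options and that the breakpoint-graph quantities transfer verbatim under $\pi \mapsto G_\pi$, and both follow immediately from the relevant definitions.
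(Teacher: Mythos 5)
Your proposal is correct and follows the same route as the paper: the paper's entire proof is the observation that (prefix) reversals are a subset of (prefix) DCJs, so the bound of \autoref{thm:lower-bound-on-pdcj} transfers to $prd(\pi)$. You simply make explicit the details the paper leaves implicit (the identification of $\pi$ with $G_\pi$, the matching of the $\{0,1\}\in G_\pi$ condition with $\pi_1=1$, and the verification that a prefix reversal is one of the two DCJ recombination options), all of which are routine.
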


We now show that an optimal decomposition can be found in polynomial time. This contrasts with the problem of finding an optimal decomposition in the case of sorting by unrestricted reversals, which was shown to be \NP-complete~\cite{caprara-sorting} (note that in that context, an optimal decomposition \emph{maximises} the number of cycles). Recall that an \emph{alternating Eulerian cycle} in a bicoloured graph $G$ is a cycle that traverses every edge of $G$ exactly once and such that the colours of every pair of consecutive edges are distinct.

\begin{corollary}\label{cor:alternating-eulerian-cycles}
\cite{Kotzig1968,DBLP:journals/algorithmica/Pevzner95} 
A bicoloured connected graph contains an alternating Eulerian cycle iff the number of incident edges of each colour is the same at every vertex.
\end{corollary}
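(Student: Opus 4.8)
The plan is to route both implications through the notion of a \emph{transition system}: at each vertex $v$, a perfect matching between the black edges and the grey edges incident to $v$. Such a matching exists at $v$ exactly when the black-degree $\deg_b(v)$ equals the grey-degree $\deg_g(v)$, and choosing one matching at every vertex canonically decomposes the edge set of $G$ into edge-disjoint closed trails, each of which is alternating since every transition pairs a black edge with a grey one. I would extract both the necessity and the sufficiency of the colour-degree condition from this single correspondence.

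For necessity, I would start from a given alternating Eulerian cycle $C$ and look locally at a vertex $v$. Every transit of $C$ through $v$ enters along one edge and leaves along the next, and because consecutive edges of $C$ carry opposite colours, each transit consumes exactly one black and one grey edge at $v$. Since $C$ uses every edge once and its cyclic closure takes care of the start vertex, the edges incident to $v$ split into black--grey pairs, forcing $\deg_b(v)=\deg_g(v)$.

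For sufficiency I would argue constructively. Assuming the colour-degree condition holds everywhere, I would fix any transition system and let it break $G$ into alternating closed trails $T_1,\dots,T_m$. If $m\ge 2$, connectivity of $G$ together with the fact that the $T_i$ partition the edges guarantees a vertex $v$ met by two distinct trails $T_i,T_j$; I would then pick matched pairs $(b_i,g_i)$ and $(b_j,g_j)$ at $v$ and re-match them as $(b_i,g_j)$ and $(b_j,g_i)$. This stays a valid black-to-grey matching, so the resulting trails remain alternating, and the swap fuses $T_i$ and $T_j$ into one trail while leaving the rest untouched, lowering $m$ by one. Iterating at most $m-1$ times collapses everything into a single alternating closed trail, that is, an alternating Eulerian cycle.

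The decomposition and alternation claims are immediate from the definitions, and I would treat them as routine. The hard part will be the merging step: I must verify that the local re-matching at the shared vertex genuinely splices $T_i$ and $T_j$ into one closed trail rather than fragmenting a trail or producing a shorter cycle. I would confirm this by following the rerouted closed walk -- in along $b_i$ to $v$, out along $g_j$ and once around $T_j$ to arrive back at $v$ along $b_j$, then out along $g_i$ and once around $T_i$ back to the start -- which exhibits the two cyclic sequences joined end to end. Connectivity is precisely what ensures such a shared vertex exists as long as more than one trail survives, so the iteration cannot terminate prematurely.
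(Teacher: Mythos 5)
Your argument is correct, but note that the paper itself offers no proof of this statement: it is quoted as a known result, with the references to Kotzig (1968) and Pevzner (1995) standing in for the argument, and the corollary is only ever \emph{used} (in \autoref{prop:optimal-decomp-poly}) to identify each trivial-cycle-free connected component of $U\negthinspace BG(G)$ with a single alternating cycle. What you have written is essentially the classical transition-system proof from those references: necessity because each transit of an alternating Eulerian cycle through a vertex consumes one black and one grey edge, and sufficiency by fixing a black--grey perfect matching at every vertex, letting it induce a decomposition into alternating closed trails, and then using connectivity to repeatedly splice two trails meeting at a common vertex via the re-matching $(b_i,g_i),(b_j,g_j)\mapsto(b_i,g_j),(b_j,g_i)$. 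Your attention to the one genuinely delicate point is well placed: the swap merges rather than fragments precisely because the two matched pairs belong to \emph{distinct} trails, and your explicit traversal of the rerouted closed walk verifies this. The only step you might state more carefully is the existence of the shared vertex: if no two trails shared a vertex, the vertex sets of the trails would partition $V$ into parts with no edges between them, contradicting connectivity when $m\ge 2$. With that spelled out, your proof is a complete and self-contained justification of a fact the paper imports as a black box.
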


\begin{proposition}\label{prop:optimal-decomp-poly}
There exists a polynomial-time algorithm for computing an optimal  decomposition for $U\negthinspace BG(G)$.
\end{proposition}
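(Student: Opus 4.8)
The plan is to construct an optimal decomposition explicitly in two phases, guided by the characterisation of \autoref{lemma:optimal-decompositions}: a decomposition is optimal precisely when it maximises the number of trivial cycles and simultaneously minimises the number of nontrivial cycles. First I would observe that a trivial cycle of $U\negthinspace BG(G)$ consists of a single black edge together with a parallel grey edge; since grey edges only join numerically consecutive vertices, the trivial cycles are exactly the black edges of the form $\{u, u+1\}$ paired with the grey edge on the same pair. Because such a black edge is single (a doubled edge is a breakpoint by definition), distinct candidates use distinct grey edges, so all of them can be realised as trivial cycles at once. Extracting every such pair therefore yields a partial decomposition whose trivial-cycle count is clearly the maximum possible.

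In the second phase I would handle the residual graph $H$ obtained by deleting the extracted trivial cycles. The key structural observation is that $H$ is colour-balanced at every vertex: in $U\negthinspace BG(G)$ each vertex is incident to equally many black and grey edges (two of each for internal vertices, one of each for $0$ and $n+1$), and removing a trivial cycle deletes one black and one grey edge at each of its two endpoints, so this balance is preserved. By \autoref{cor:alternating-eulerian-cycles}, each connected component of $H$ then admits an alternating Eulerian cycle, i.e.\ is covered by a single alternating cycle. Choosing one such cycle per component decomposes $H$ into exactly as many nontrivial cycles as $H$ has connected components, which is best possible since every component must contain at least one cycle.

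Concatenating the two phases gives a decomposition $\mathscr{D}$ that maximises the number of trivial cycles and, as argued below, minimises the number of nontrivial cycles; \autoref{lemma:optimal-decompositions} then certifies that $\mathscr{D}$ is optimal. Every step runs in polynomial time: building $U\negthinspace BG(G)$ and listing the candidate trivial cycles is linear, computing connected components is standard, and an alternating Eulerian cycle is found by the usual reduction to an ordinary Eulerian cycle (pairing, at each vertex, its black edges with its grey edges) followed by Hierholzer's algorithm.

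The delicate point that the argument must pin down is that maximising trivial cycles does not inflate the nontrivial count: a priori, forcing out all trivial cycles could fragment the graph and create extra cycles. I would settle this with \autoref{lemma:optimal-decompositions} once more. Any globally optimal decomposition already maximises the trivial count, so it too extracts all of the above pairs and hence has $H$ as its residual graph; its nontrivial cycles therefore cover $H$ and number at least the number of connected components of $H$. Since our construction attains exactly that many nontrivial cycles, it matches the optimum, so $\mathscr{D}$ minimises the nontrivial count as well. This closes the compatibility gap and, with \autoref{lemma:optimal-decompositions}, proves that the polynomial-time procedure returns an optimal decomposition.
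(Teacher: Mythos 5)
Your proposal is correct and takes essentially the same route as the paper's proof, which also extracts all trivial cycles and then invokes \autoref{cor:alternating-eulerian-cycles} to turn each connected component of the residual graph into a single alternating cycle; your write-up simply makes explicit the colour-balance and optimality arguments (via \autoref{lemma:optimal-decompositions}) that the paper dismisses as ``straightforward''. One small imprecision worth fixing: a black edge $\{u,u+1\}$ of multiplicity two yields two parallel candidates sharing a single grey edge, so only one copy can be extracted as a trivial cycle --- but this changes neither the maximum trivial count, nor the residual graph, nor the rest of your argument.
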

\begin{proof}
Straightforward: extract all trivial cycles from $U\negthinspace BG(G)$. Each connected component in the resulting graph then corresponds to a cycle (\autoref{cor:alternating-eulerian-cycles}). 
\qed\end{proof}

Finally, we note that the lower bound of \autoref{thm:lower-bound-on-pdcj} is always at least as large as the number of breakpoints (see Appendix for the proof).

\begin{proposition}\label{prop:new-lb-on-prd-better-than-breakpoints}

For any unsigned genome $G$, the lower bound from~\autoref{eqn:lower-bound-on-pdcj} is greater than or equal to $b(G)$, and the gap that separates both bounds can be arbitrarily large. 
\end{proposition}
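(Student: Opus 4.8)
The plan is to rewrite the right-hand side of~\eqref{eqn:lower-bound-on-pdcj} so that the comparison with $b(G)$ reduces to a statement about the number of \emph{nontrivial} cycles in an optimal decomposition, and then to exhibit an explicit family realising an unbounded gap.

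First I would determine the exact relationship between $c_1^*(U\negthinspace BG(G))$ and $b(G)$. The key observation is that a trivial cycle of $U\negthinspace BG(G)$ is precisely a black edge parallel to a grey edge, that is, a black edge of the form $\{u,u+1\}$; by~\autoref{lemma:optimal-decompositions} every such edge is realised as a trivial cycle in an optimal decomposition. A black edge with $0\notin e$ is an adjacency if and only if it has this form, so every adjacency not incident to $0$ is a trivial cycle; on the other hand, the (unique) black edge incident to $0$ is an adjacency by definition, but it is parallel to the grey edge $\{0,1\}$ only when it equals $\{0,1\}$. Writing $a(G)$ for the number of adjacencies and using that $a(G)+b(G)$ equals the number $n+1$ of black edges, I would obtain $c_1^* = (n+1)-b(G)-d$, where $d=0$ if $\{0,1\}\in G$ and $d=1$ otherwise; in particular the quantity subtracted in~\eqref{eqn:lower-bound-on-pdcj} is exactly $2d$.

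Next I would substitute $c^* = c_1^* + c_{\ge 2}^*$, where $c_{\ge 2}^*$ is the number of nontrivial cycles, together with the identity above into the bound. After cancellation the whole expression collapses to $\text{(lower bound)}-b(G) = c_{\ge 2}^* - d$, so the claim reduces to proving $c_{\ge 2}^* \ge d$. When $\{0,1\}\in G$ we have $d=0$ and there is nothing to show; the only substantive case is $\{0,1\}\notin G$, where $d=1$ and I must exhibit at least one nontrivial cycle. Here the black edge at $0$ is $\{0,w\}$ with $w\neq 1$ while the grey edge at $0$ is $\{0,1\}$; since these are not parallel, the alternating cycle through $0$ has length at least two, giving $c_{\ge 2}^*\ge 1$. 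I expect this boundary bookkeeping to be the main obstacle: the edge at $0$ is always an adjacency yet only sometimes a trivial cycle, and this is the single place where the breakpoint count and the $\{0,1\}$-correction fail to align naively.

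Finally, for the unbounded gap, the same cancellation shows the gap equals $c_{\ge 2}^*-d$, so it suffices to construct genomes with arbitrarily many nontrivial cycles. I would take the direct sum of $k$ copies of the small pattern $\rho = 1\ 3\ 2\ 4$: each copy starts with its block-minimum and ends with its block-maximum, so the black edges joining consecutive blocks have the form $\{u,u+1\}$ and therefore become trivial cycles rather than merging the cycles of adjacent blocks. Each block then contributes exactly one nontrivial $2$-cycle, so that $c_{\ge 2}^*=k$ while $\{0,1\}\in G$ forces $d=0$; the gap is thus exactly $k$. A short direct computation on this family (one finds $b(G)=2k$ and a lower bound of $3k$) confirms the claim.
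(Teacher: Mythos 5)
Your proof is correct and follows essentially the same route as the paper's: your identity $c_1^* = (n+1)-b(G)-d$ unifies the paper's two cases (its equations for $\{0,1\}\in G$ and $\{0,1\}\notin G$), your observation that the alternating cycle through $0$ is nontrivial when $\{0,1\}\notin G$ is exactly the paper's inequality $c^*\geq c_1^*+1$, and your reduction of the gap to $c_{\ge 2}^*-d$ is a cleaner repackaging of the same cancellation. The only difference is cosmetic: for the unbounded gap you use a direct sum of blocks $1\,3\,2\,4$ (gap $k$ with $b(G)=2k$), whereas the paper uses blocks $1\,3\,5\,2\,4\,6$ (gap $p$ with $b(G)=5p$); both families work by the identical mechanism of one nontrivial cycle per block.
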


\section{Prefix DCJs}

\subsection{Signed Prefix DCJs}
We give a polynomial-time algorithm for \spdcj{.}

\begin{theorem}
The \spdcj{} problem is in \P.\end{theorem}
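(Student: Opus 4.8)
The goal is to show that \spdcj{} is in \P{} by proving that the lower bound of \autoref{thm:lower-bound-on-psdcj} is in fact always achieved, and that a corresponding sorting scenario can be constructed in polynomial time. In other words, I would aim to prove that for every signed linear genome $G$,
$$
psdcj(G)=
n+1+c(BG(G))-2c_1(BG(G))-\left\{
\begin{array}{ll}
0 & \mbox{if } \{0, 1\}\in G, \\
2 & \mbox{otherwise}.
\end{array}
\right.
$$
The lower bound being already in hand, the entire burden is the matching upper bound: I must exhibit a prefix DCJ (or a short, accountable sequence of prefix DCJs) that decreases the right-hand side by exactly one at each step, until the signed identity genome is reached. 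The natural strategy is a greedy one, guided directly by the exchange analogy underlying \autoref{thm:formula-for-ped}: think of the black edges of $BG(G)$ as playing the role of the permutation, and of each prefix DCJ as mimicking a prefix exchange on the cycle structure.

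\textbf{Key steps.} First I would set up the correspondence precisely, tracking the quantity $\Phi(G) = c(BG(G)) - 2c_1(BG(G))$ together with the correction term governed by whether the adjacency $\{0,1\}$ is present. The target configuration is the one where $BG(G)$ consists entirely of trivial cycles and $\{0,1\}\in G$, for which the formula returns $0$. Next, I would perform a case analysis on the cycle containing the black edge incident to vertex $0$ (the leftmost edge, which every prefix DCJ must touch). The three modes of a prefix DCJ identified in the introduction --- prefix reversal, cycle extraction, and cycle reincorporation --- give me exactly the moves I need: a prefix DCJ can either \emph{split} the cycle through vertex $0$ into two cycles (analogous to an exchange acting within one cycle), or \emph{merge} it with another cycle, changing $c(BG(G))$ by one in either direction. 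The crux is to always pick a move that creates a new trivial cycle when one is available, since by the coefficient $-2$ on $c_1$, creating a trivial cycle lowers $\Phi$ by two while raising $c$ by one, for a net decrease of one in the bound; and to otherwise make progress on the correction term by fixing the position of $0$. I would verify that from any non-identity configuration at least one such distance-decreasing move always exists, mirroring the argument that proves the analogous formula for the prefix exchange distance, and that each move is computable in polynomial time. Summing the per-move decrements shows the scenario has length exactly equal to the lower bound.

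\textbf{The main obstacle.} The delicate point will be handling the correction term and the interaction between the leftmost edge and the creation of trivial cycles. When $\{0,1\}\notin G$, the bound carries an extra $-2$, which means the argument must guarantee that fixing vertex $0$ into its correct adjacency can be arranged \emph{without wasting a move} --- ideally folding it into a step that simultaneously creates a trivial cycle or otherwise keeps the telescoping tight, exactly as the ``$\pi_1=1$ versus otherwise'' dichotomy does in \autoref{thm:formula-for-ped}. I expect the proof to reduce, via the cycle-structure analogy, to quoting the prefix exchange distance formula of Akers \emph{et al.}, so that the real content is showing that each prefix DCJ can realise the cycle-count change of the corresponding prefix exchange under the prefix constraint, and that the greedy choice never gets stuck. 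Care will also be needed to confirm that the moves respect the signed-genome structure (perfect matchings over $\{0,\ldots,2n+1\}$) throughout, so that intermediate configurations remain valid signed genomes and the final one is genuinely the signed identity.
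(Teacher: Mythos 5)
Your plan is essentially the paper's own proof: the paper shows the lower bound of \autoref{thm:lower-bound-on-psdcj} is tight by greedily exhibiting, in each of the two cases ($\{0,1\}\notin G$ versus $\{0,1\}\in G$), an explicit prefix DCJ that decreases the right-hand side by exactly one --- creating one or two trivial cycles via a grey edge in the first case, and merging the trivial cycle at $0$ with the first nontrivial cycle in the second --- so the distance telescopes down to the bound and each move is computable in polynomial time. One small inaccuracy in your crux sentence: when $\{0,1\}\in G$ but the genome is unsorted, the bound-decreasing move \emph{breaks} the adjacency $\{0,1\}$ (sacrificing a trivial cycle and gaining the $-2$ correction term for a net decrease of one), rather than ``fixing the position of $0$''; this is exactly the prefix-exchange move of Akers et al.\ that you invoke as a template, so your plan still goes through once that case is stated correctly.
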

\begin{proof}
We show that the lower bound of \autoref{thm:lower-bound-on-psdcj} is tight. For convenience, let 
$g(G)$ 
denote the right-hand side of \autoref{eqn:lower-bound-on-psdcj}, and let $\pi'$ denote the unsigned translation of the underlying signed permutation $\pi$ from which $G$ is obtained (recall \autoref{def:signed-genomes} and the fact that $G$ is linear):
\begin{itemize}
    \item if $\pi'_1\neq 1$: then the grey edge $\{\pi'_1, x\}$ connects by definition $\pi'_1$ to an element $x\in\{\pi'_1-1,\pi'_1+1\}$. Let $\{x, y\}$ be the black edge incident with $x$; then the prefix DCJ that replaces $\{0, \pi'_1\}$ and  $\{x, y\}$ with $\{0, y\}$ and $\{\pi'_1, x\}$ creates one or two new 1-cycles, depending on the value of $y$. Let  
$G'$ denote the resulting genome:\begin{enumerate}
        \item if $y\neq 1$, then
        \begin{align*}
         g(G')-g(G) &= n+1+c(BG(G))+1-2(c_1(BG(G))+1)-2\\
         &- (n+1+c(BG(G))-2c_1(BG(G))-2)\\
         &= -1.
        \end{align*}
        \item if $y=1$, then
        \begin{align*}
         g(G')-g(G) &= n+1+c(BG(G))+1-2(c_1(BG(G)+2))\\
         &-(n+1+c(BG(G))-2c_1(BG(G))-2)\\
         &= -1.
        \end{align*}
    \end{enumerate}
    Therefore, the value of the lower bound decreases by one in both cases.

\item otherwise, let $i$ be the smallest index such that $|\pi'_{2i-1}-\pi'_{2i}|\neq 1$. Then the prefix DCJ that replaces black edges $\{0, \pi'_1\}$ and $\{\pi'_{2i-1},\pi'_{2i}\}$ with $\{0, \pi'_{2i-1}\}$ and $\{\pi'_1, \pi'_{2i}\}$ decreases the number of $1$-cycles by $1$. Let us again use $G'$ to denote the resulting 
genome; then
        \begin{align*}
          g(G')-g(G) &= n+1+c(BG(G))-1-2(c_1(BG(G))-1)-2\\
          &-(n+1+c(BG(G))-2c_1(BG(G)))\\
         &= -1.
        \end{align*}
\end{itemize}
\qed\end{proof}

\subsection{Unsigned Prefix DCJs}

The complexity of the \updcj{} problem remains open, and we conjecture it to be \NP-complete.
Here, we prove two results, both based on the number of breakpoints. The first one is a 3/2-approximation algorithm for solving \updcj{} (\autoref{thm:approx}), the second one is a \FPT{} algorithm with respect to $b(G)$ (\autoref{thm:fpt}). 

We start with our approximation algorithm. First, observe that prefix DCJs on linear genomes may produce nonlinear genomes, but the structure of these genomes is nonetheless not arbitrary. We characterise some of their properties in the following result, which will be useful later on.

\begin{lemma}\label{lemma:linear-genome-0-and-max-always-in-same-path}
Let $G$ be a linear genome and $S$ be an arbitrary sequence of prefix DCJs that transform $G$ into a new genome $G'$. Then:
\begin{enumerate}
    \item $G'$ contains exactly one path, whose endpoints are $0$ and $n+1$;
    \item if $G'$ contains any other component, then that component is a cycle.
\end{enumerate}
\end{lemma}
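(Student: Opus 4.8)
The plan is to argue by induction on the length of the sequence $S$ of prefix DCJs. The base case is immediate: if $S$ is empty, then $G'=G$ is a linear genome, so by \autoref{def:genome} it consists of a single path with endpoints $0$ and $n+1$, and there are no other components, so both claims hold vacuously. For the inductive step, I would assume that after applying the first $k$ prefix DCJs of $S$ we obtain a genome $H$ satisfying both properties, and then analyse the effect of applying one more prefix DCJ $\delta(e, f)$ to $H$, showing that the resulting genome $H'$ still satisfies both properties.

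The key structural observation driving the induction is that, by \autoref{def:dcj}, a prefix DCJ requires $0\in e$ or $0\in f$; say $0\in e$, so $e=\{0, v\}$ for some vertex $v$. By the inductive hypothesis, $0$ lies on the unique path $P$ of $H$ whose endpoints are $0$ and $n+1$, and since $0$ has degree $1$ in any genome (it is an endpoint of $P$ and appears in no other edge), $e$ is the unique edge incident with $0$. I would then perform a short case analysis on the location of the second cut edge $f=\{w, x\}$: either $f$ lies on the same path $P$, or $f$ lies on one of the cycle components guaranteed by part~2 of the hypothesis. In each case, cutting $e$ and $f$ and rejoining the four endpoints in either of the two allowed ways redistributes the severed pieces, and I would verify directly that exactly one path survives, that its endpoints remain $0$ and $n+1$, and that every other resulting component is a cycle.

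Concretely, cutting $e$ detaches $0$ from $P$ and leaves a linear segment ending at $n+1$; cutting $f$ severs either $P$ again (creating two further segment-ends) or opens one of the existing cycles into a path-segment. Reconnecting the four loose ends in either valid way, I would check that $0$ and $n+1$ end up as the two endpoints of a single path (since $0$ and $n+1$ are the only degree-$1$ vertices available to serve as path endpoints, and the rejoining always reattaches $0$ to exactly one loose end), while any leftover segment necessarily closes up into a cycle. The point is that a DCJ preserves all vertex degrees except at the four endpoints involved, so the only vertices that can become path-endpoints are those among $\{0, v, w, x\}$ that are left with degree $1$; tracking these four endpoints suffices to conclude.

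The main obstacle I anticipate is the case bookkeeping: when $f$ lies on the path $P$, the two allowed rejoinings correspond respectively to a prefix reversal (keeping everything on one path) and to a cycle extraction (splitting off a cycle), and I must confirm that neither option strands $n+1$ on a separate path, nor leaves a second path. When $f$ lies on an existing cycle, the rejoining reincorporates that (opened) cycle into the path containing $0$, and I must confirm the path endpoints are still $0$ and $n+1$ and that no cycle is accidentally broken into a path. Carefully enumerating the two rejoining options against these two placements of $f$, and checking the endpoint degrees in each, is the crux; the rest is routine.
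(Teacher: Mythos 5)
Your proposal is correct and follows essentially the same route as the paper's own proof: induction on the number of prefix DCJs, using the hypothesis that $0$ is an endpoint of the unique path to conclude the cut edge $e=\{0,v\}$ lies on that path, then a case analysis on whether the second cut edge $f$ lies on the path (yielding a prefix reversal or a cycle extraction) or on a cycle (where both rejoinings reincorporate it into the path), checking in each case that the single surviving path still has endpoints $0$ and $n+1$. The degree-tracking remark you add is a fine way to make the endpoint bookkeeping explicit, but it does not change the substance of the argument.
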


\begin{proof}
By induction on $k=|S|$. If $k=0$, then the claim clearly holds. Otherwise, let $\delta$ be a prefix DCJ that cuts edges $e=\{0, v\}$ and $f=\{w, x\}$ from a genome $G''$ obtained from $G$ by $k-1$ prefix DCJs; by hypothesis, $0$ and $n+1$ are the endpoints of the only path $P$ of $G''$.
If both $e$ and $f$ belong to $P$, then $\delta$ either extracts a subpath $Q$ from $P$ that will become a cycle, or reverses a subpath $R$ of $P$; in both cases, neither $Q$ nor $R$ contains $0$ nor $n+1$, which become extremities of $P\setminus Q$ (or of the path obtained from $P$ by reversing $R$).
Otherwise, since $e=\{0, v\}$, by hypothesis $f$ belongs to a cycle, and both ways of recombining the extremities of $e$ and $f$ yield a path starting with $0$ and ending with $n+1$, preserving any other cycle of~$G''$. \qed
\end{proof}

We will need the following lower bound. 

\begin{lemma}\label{lemma:pdcj-breakpoint-lb}
For any genome $G$, we have $pdcj(G)\ge b(G)$. Moreover, if $G$ is unsorted and contains $\{0, 1\}$ and $\{1, 2\}$, then $pdcj(G)>b(G)$.
\end{lemma}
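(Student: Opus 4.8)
The plan is to prove the two assertions of \autoref{lemma:pdcj-breakpoint-lb} separately, building on the structural information from \autoref{lemma:linear-genome-0-and-max-always-in-same-path} and a careful analysis of how a single prefix DCJ can change the breakpoint count $b(G)$.

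First I would establish the inequality $pdcj(G)\ge b(G)$. The natural approach is to argue that a single prefix DCJ decreases the number of breakpoints by at most one. Fix a genome $G$ with a prefix DCJ $\delta$ cutting edges $e=\{0,v\}$ and $f=\{w,x\}$. Note first that $e$ is never a breakpoint (since $0\in e$). The DCJ removes $e$ and $f$ and introduces two new edges; the net change in $b(G)$ is bounded by how many of the new edges fail to be adjacencies compared with how many of the removed edges were breakpoints. Since only one genuine breakpoint ($f$, in the best case) is removed while at least the edge touching $0$ is replaced, one checks that $b$ can drop by at most one per operation. Because the identity genome has $b(\iota)=0$, iterating this over an optimal sorting sequence of length $pdcj(G)$ gives $b(G)\le pdcj(G)$.

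The more delicate part is the strict inequality when $G$ is unsorted and contains both adjacencies $\{0,1\}$ and $\{1,2\}$. Here I would argue that the \emph{first} operation of any sorting sequence cannot decrease $b(G)$. The key observation is that the prefix DCJ must cut the edge $e=\{0,1\}$ incident with $0$. Since $\{0,1\}$ and $\{1,2\}$ are adjacencies, cutting $e$ and reattaching $0$ to an endpoint $y$ of some other cut edge $f=\{w,x\}$ forces a new edge $\{0,y\}$; unless $y=1$ (which would merely undo the cut, not a valid distinct-edge DCJ or not progress) this new edge is a breakpoint, and moreover severing $1$ from $0$ tends to create a breakpoint at $1$ as well because of the adjacency $\{1,2\}$ that remains. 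I would enumerate the two recombination options and the possible locations of $f$ to show that in each case the number of breakpoints does not decrease on this first move: destroying the adjacency structure around $1$ costs at least as many breakpoints as are gained. Combined with $b(G')\ge b(G)$ after the first step and the general inequality applied to the remaining $pdcj(G)-1$ steps, we obtain $pdcj(G)\ge 1+b(G')\ge 1+b(G)>b(G)$.

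The main obstacle I anticipate is the case analysis for the strict inequality: one must show robustly that \emph{no} choice of the second cut $f$ and \emph{neither} recombination option lets the first DCJ make net progress on the breakpoint count, and one must handle the interaction between $0$, $1$, and $2$ when the second cut happens to touch element $2$ or lies in a cycle (possible by \autoref{lemma:linear-genome-0-and-max-always-in-same-path}). The bookkeeping around multiplicity-two edges and loops in nonlinear intermediate genomes is where the argument is easiest to get wrong, so I would state precisely, via the breakpoint definition, which of the at most two new edges are adjacencies in each subcase before concluding.
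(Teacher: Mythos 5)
Your strategy is sound, and for the first claim it is genuinely different from (and more elementary than) the paper's. The paper obtains $pdcj(G)\ge b(G)$ by combining the cycle-based lower bound of \autoref{thm:lower-bound-on-pdcj} with \autoref{prop:new-lb-on-prd-better-than-breakpoints}; you instead count directly: in a prefix DCJ, the cut edge containing $0$ is by definition not a breakpoint, so at most one removed edge is a breakpoint, while added edges can never destroy an existing breakpoint, hence $b$ drops by at most one per operation, and iterating from $b(\iota)=0$ over an optimal scenario gives the bound. This is correct (it applies verbatim to the nonlinear intermediate genomes allowed by \autoref{lemma:linear-genome-0-and-max-always-in-same-path}) and has the advantage of being self-contained, not relying on the machinery of optimal cycle decompositions. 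For the second claim, your plan --- show that the first move cannot decrease $b$, then apply the first claim to the remaining $pdcj(G)-1$ moves --- is exactly the paper's argument.

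There is, however, one misstatement in your part-two sketch that must be repaired, because as written the key step is attached to the wrong edge. You claim the new edge $\{0,y\}$ is a breakpoint unless $y=1$; by the definition of breakpoint, an edge containing $0$ is \emph{never} a breakpoint, so that sentence proves nothing. The breakpoint that makes the argument work is the one forced at vertex $1$, and it always appears: the first prefix DCJ must cut $\{0,1\}$, and any nontrivial recombination gives vertex $1$ a new edge $\{1,z\}$. If $z=1$ (which happens precisely when the second cut edge is $\{1,2\}$) this is a loop, hence a breakpoint; if $z=2$ it duplicates the surviving adjacency $\{1,2\}$, hence a breakpoint by multiplicity; otherwise $|1-z|\neq 1$, again a breakpoint. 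Relatedly, dismissing $z=1$ as ``merely undoing the cut'' is incorrect --- that case creates the loop $\{1,1\}$, which helps rather than hurts the argument. With the roles of the two new edges corrected (the edge at $0$ contributes nothing, the edge at $1$ always contributes one breakpoint, and at most one breakpoint is removed with the second cut), you get $b(G')\ge b(G)$ after the first move, and the strict inequality $pdcj(G)\ge 1+b(G')\ge 1+b(G)$ follows as you describe; this corrected version coincides with the paper's proof of the second claim.
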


\begin{proof}
The first claim follows directly from \autoref{thm:lower-bound-on-pdcj} and \autoref{prop:new-lb-on-prd-better-than-breakpoints}.
For the second claim, 
if $G$ is unsorted and contains $\{0, 1\}$ and $\{1, 2\}$, then any new edge $\{1, y\}$ that would replace $\{0, 1\}$ would yield a breakpoint --- either because $1$ and $y$ cannot be consecutive in values or, in the event that $y=2$, because edge $\{1, 2\}$ would get multiplicity $2$ and thereby would also count as a breakpoint.
\qed\end{proof}

We are now ready to prove our upper bound on $pdcj(G)$.

\begin{lemma}\label{lemma:3-2-breakpoints}
For any linear genome $G$, we have $pdcj(G)\le \frac{3b(G)}{2}$.
\end{lemma}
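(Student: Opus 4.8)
The plan is to exhibit an explicit sorting procedure and to bound its length by tracking, move by move, how the number of breakpoints $b$ evolves. The key starting observation is that a prefix DCJ cuts the leftmost edge $\{0,v\}$ together with a second edge $\{w,x\}$, and that any edge incident to $0$ is an adjacency by definition. Hence, among the two cut edges only $\{w,x\}$ can be a breakpoint, and among the two created edges only the new non-$0$ edge can be one; consequently every prefix DCJ changes $b$ by at most $1$, and in particular can \emph{decrease} it by at most $1$. So any scenario needs at least $b(G)$ breakpoint-decreasing moves, and the whole difficulty is to bound the number of additional, non-decreasing moves by $b(G)/2$.

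First I would characterise precisely when a single prefix DCJ decreases $b$. Recombining $\{0,v\}$ and a breakpoint $\{w,x\}$ yields a new non-$0$ edge $\{v,w\}$ or $\{v,x\}$, and this edge is an adjacency exactly when its second endpoint is $v\pm1$ \emph{and} the edge $\{v,v\pm1\}$ is not already present (otherwise it would acquire multiplicity two and remain a breakpoint). A short inspection of vertex degrees and of the path structure then shows that no such breakpoint-decreasing move exists if and only if $G$ is unsorted with both $\{0,1\}$ and $\{1,2\}$ present --- precisely the obstruction isolated in \autoref{lemma:pdcj-breakpoint-lb}. I will call such genomes \emph{blocked}; in a blocked genome element $1$ is incident only to the adjacencies $\{0,1\}$ and $\{1,2\}$, hence lies on no breakpoint. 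When $G$ is not blocked, a single prefix DCJ decreases $b$ by one.

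For a blocked genome I would apply a \emph{setup} move that cuts $\{0,1\}$ together with any breakpoint $\{w,x\}$ (whose endpoints both differ from $1$, as just noted); since $\{1,2\}\in G$, the edge newly created at $1$ is necessarily a breakpoint, so the setup leaves $b$ unchanged, while the leftmost element becomes $w\neq1$, so the genome is no longer blocked and a decreasing move is again available. The analysis is then organised into \emph{blocks}: a block is either one breakpoint-decreasing move (one operation, one breakpoint removed), or a setup followed by two breakpoint-decreasing moves (three operations, two breakpoints removed). Writing $a$ and $g$ for the numbers of blocks of each type, every breakpoint is removed by exactly one move, so $a+2g=b(G)$, while the number of operations is $a+3g=b(G)+g$; since $2g\le a+2g=b(G)$ we get $g\le b(G)/2$, whence the length is at most $3b(G)/2$, as claimed (the case $b(G)$ odd is absorbed since then $a\ge1$ and $g\le(b(G)-1)/2$).

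The main obstacle is to justify the second type of block, namely that after a setup one can always perform \emph{two} decreasing moves before being driven back into a blocked state. A decreasing move glues $v$ to $v\pm1$ and sends the other endpoint of the cut breakpoint to the front, so we re-block only if that endpoint is $1$ while $\{1,2\}$ is still present. The heart of the argument is to show, by a case analysis on the edges incident to $v\pm1$ and to $1$, that the setup breakpoint and the orientations of the two subsequent recombinations can always be chosen to avoid this, exploiting the freedom in which breakpoint to cut. Throughout, \autoref{lemma:linear-genome-0-and-max-always-in-same-path} is used to keep the intermediate (possibly non-linear) genomes under control: each consists of the unique $0$--$(n+1)$ path together with a few cycles, and the two decreasing moves typically act by first extracting a cycle and then reincorporating it. Once this choice is shown to be always possible, the block decomposition above is valid and the bound $pdcj(G)\le\frac{3b(G)}{2}$ follows.
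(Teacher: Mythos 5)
Your overall architecture matches the paper's: identify the blocked configurations (unsorted, with both $\{0,1\}$ and $\{1,2\}$ present, exactly as in \autoref{lemma:pdcj-breakpoint-lb}), handle them with a setup move followed by two breakpoint-decreasing moves, and amortise $3$ moves per $2$ breakpoints to get the $\tfrac{3b(G)}{2}$ bound. The accounting at the end ($a+2g=b(G)$, hence at most $b(G)+g\le \tfrac{3b(G)}{2}$ operations, plus the implicit fact that a genome with no breakpoints must be the identity, since every cycle contains a breakpoint) is fine. However, there is a genuine gap: the claim you yourself call ``the heart of the argument'' --- that after a setup one can always perform \emph{two} decreasing moves --- is asserted, not proven. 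You defer it to ``a case analysis on the edges incident to $v\pm1$ and to $1$, exploiting the freedom in which breakpoint to cut,'' but that case analysis is precisely where all the difficulty of the lemma lies, and your formulation of the setup makes the need for it acute: you allow the setup to cut \emph{any} breakpoint $\{w,x\}$, which creates a new breakpoint incident to element $1$ (say $\{1,c\}$). If $c$ happens to be the only usable value-neighbour of the new front element $w$ and the only breakpoint incident to $c$ is $\{1,c\}$, then the unique decreasing move available cuts $\{1,c\}$ and puts $1$ back at the front while $\{1,2\}$ is still present --- the genome re-blocks after a single decrease, and your blocks degrade to $2$ moves per breakpoint, i.e.\ ratio $2$, not $3/2$.

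The paper closes exactly this hole by a specific, global choice rather than a local one: the setup cuts $\{0,1\}$ together with the \emph{first} breakpoint $\{k,\ell\}$ along the path (so $1,2,\ldots,k$ is the initial strip), and recombines so as to extract that strip as a cycle $(1,2,\ldots,k)$, with the path now starting $0,\ell$. Since genomes have maximum degree $2$ (\autoref{lemma:linear-genome-0-and-max-always-in-same-path}), element $1$ then has both its incident edges inside that cycle, whose vertex set is exactly $\{1,\ldots,k\}$. The next decreasing move cuts a breakpoint incident to a value-neighbour of $\ell$, and since $k<\ell-1$, both $\ell-1$ and $\ell+1$ lie outside the cycle; hence the breakpoint cut lies in a different component and the new front element is $>k>1$, so the genome is not blocked and a second decreasing move exists. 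This quarantine argument is the missing content of your ``case analysis''; without it (or an equivalent explicit choice of setup and orientations), the bound does not follow.
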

\begin{proof}
Assume $G$ is not the identity genome, in which case the claim trivially holds. We have two cases to consider:
\begin{enumerate}
    \item 
  if $\{0, v\}\in G$ with $v\neq 1$, 
  then $G$ contains an element $x\in\{v-1, v+1\}$ that is not adjacent to $v$. By \autoref{lemma:linear-genome-0-and-max-always-in-same-path}, every vertex in $G$ has degree $1$ or $2$, so $x$ has a neighbour $y$ such that $\{x, y\}$ is a breakpoint (either because $|x-y|\neq 1$ or because $\{x, y\}$ has multiplicity two). 
  The prefix DCJ that replaces $\{0, v\}$ and the breakpoint $\{x, y\}$ with the adjacency $\{v, x\}$ and $\{0, y\}$ yields a genome $G'$ with $b(G')=b(G)-1$.

    \item 
    otherwise, $\{0, 1\}\in G$. If $\{1, 2\}\notin G$, then $2$ has a neighbour $y$ in $G$ such that $\{2, y\}$ is a breakpoint, in which case the prefix DCJ that replaces $\{0, 1\}$ and  $\{2, y\}$ with $\{0, y\}$ and $\{1, 2\}$ yields a genome $G'$ with $b(G')=b(G)-1$.
    
     If $\{1, 2\}\in G$, then let $k$ be the closest element to $0$ in the only path of $G$ such that the next vertex $\ell$ forms a breakpoint with $k$. Then the prefix DCJ $\delta_1$ that replaces $\{0, 1\}$ and $\{k, \ell\}$ with $\{0, \ell\}$ and $\{1, k\}$  yields a genome $G'$ which contains the cycle $(1, 2, \ldots, k)$ and with $b(G')=b(G)$. 
     Although $\delta_1$ does not reduce the number of breakpoints, $G'$ allows us to apply two subsequent operations that do:
     \begin{enumerate}
         \item since $\{0, \ell\}\in G'$ with $\ell\neq 1$, the analysis of case 1 applies and guarantees the existence of a prefix DCJ $\delta_2$ that produces a genome $G''$ with $b(G'')=b(G')-1$.
         \item $\delta_2$ replaces  $\{0, \ell\}$ 
         and breakpoint $\{a, b\}$ with $\{0, a\}$ 
         and adjacency $\{b, \ell\}$. Since $\{k, \ell\}$ was a breakpoint in $G$, we have $k<\ell-1$. Moreover, $\delta_1$ extracted from $G$ a cycle consisting of all elements in $\{1, 2, \ldots, k\}$. Therefore, the breakpoint $\{a, b\}$ cut by $\delta_2$ belongs to a component of $G''$ different from that cycle, 
         which means that $a>k>1$ and in turn implies that case 1 applies again: there exists a third prefix DCJ $\delta_3$ transforming $G''$ into a genome $G'''$ such that $b(G''')=b(G'')-1=b(G)-2$.
     \end{enumerate}
\end{enumerate}

This implies that, in the worst case, i.e. when $\{0,1\} \in G$ and $\{1, 2\} \in G$, there exists a sequence of three prefix DCJs that yields a genome $G'''$ with $b(G''')=b(G)-2$. Therefore, starting with $b(G)$ breakpoints, we can decrease this number by two using at most three prefix DCJs. Since the identity genome has no breakpoint, we conclude that $pdcj(G)\leq\frac{3b(G)}{2}$. 
\qed\end{proof}

\autoref{lemma:pdcj-breakpoint-lb} and \autoref{lemma:3-2-breakpoints} immediately imply the existence of a 3/2-ap\-prox\-i\-ma\-tion for sorting by prefix DCJs, as stated by the following theorem.

\begin{theorem}
\label{thm:approx}
The \updcj{} problem is 3/2-ap\-prox\-imable.
\end{theorem}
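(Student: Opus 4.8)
The plan is to combine the lower bound from \autoref{lemma:pdcj-breakpoint-lb} with the upper bound from \autoref{lemma:3-2-breakpoints}, since together they sandwich the optimal distance $pdcj(G)$ between $b(G)$ and $\frac{3b(G)}{2}$. The key observation is that \autoref{lemma:3-2-breakpoints} is not merely an existence statement about the distance: its proof is fully constructive, describing explicitly which breakpoints to target and which prefix DCJs to apply in each case. So the first thing I would do is extract an actual sorting algorithm from that proof, rather than treating the lemma as a black box.

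First I would describe the algorithm: repeatedly inspect the current genome $G$ and apply the prefix DCJ (or short sequence of prefix DCJs) dictated by the case analysis in \autoref{lemma:3-2-breakpoints}. Concretely, if $\{0,v\}\in G$ with $v\neq 1$ we apply the single breakpoint-reducing DCJ of case~1; if $\{0,1\}\in G$ but $\{1,2\}\notin G$ we apply the single DCJ of the first subcase of case~2; and in the worst case $\{0,1\},\{1,2\}\in G$ we apply the three DCJs $\delta_1,\delta_2,\delta_3$, which net a reduction of two breakpoints over three operations. Each case is clearly identifiable in polynomial time, and locating the relevant elements $x$, $y$, $k$, $\ell$, $a$, $b$ requires only a scan of the genome, so each step — and hence the whole procedure — runs in polynomial time.

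Next I would bound the length of the sequence this algorithm produces. By the analysis of \autoref{lemma:3-2-breakpoints}, every three prefix DCJs remove at least two breakpoints (and the cheaper cases remove one breakpoint per DCJ, which is even better), so starting from $b(G)$ breakpoints the algorithm reaches the identity genome (which has zero breakpoints) in at most $\frac{3b(G)}{2}$ operations; this is exactly the content of the upper bound already established. Since \autoref{lemma:pdcj-breakpoint-lb} gives $pdcj(G)\ge b(G)$, the ratio between the length of our solution and the optimum is at most $\frac{3b(G)/2}{b(G)}=\frac{3}{2}$, yielding the claimed approximation ratio.

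I do not expect a serious obstacle here, since the two supporting lemmas do essentially all the work. The only point requiring a little care is confirming that the procedure of \autoref{lemma:3-2-breakpoints} can genuinely be iterated: one must check that after applying the prescribed DCJs the resulting genome still falls under \autoref{lemma:linear-genome-0-and-max-always-in-same-path} (so that its structure is a single $0$-to-$(n+1)$ path plus disjoint cycles), guaranteeing that the case analysis remains applicable at the next iteration. This is immediate from that lemma, since any sequence of prefix DCJs preserves the required structure, so the argument goes through without complication.
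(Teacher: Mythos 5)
Your proposal is correct and matches the paper's proof exactly: the paper also obtains \autoref{thm:approx} directly by combining the lower bound $pdcj(G)\ge b(G)$ of \autoref{lemma:pdcj-breakpoint-lb} with the constructive upper bound $pdcj(G)\le \frac{3b(G)}{2}$ of \autoref{lemma:3-2-breakpoints}, whose case analysis serves as the approximation algorithm. Your additional remarks on polynomial-time implementability and on the structure being preserved by \autoref{lemma:linear-genome-0-and-max-always-in-same-path} are details the paper leaves implicit, but they introduce nothing beyond the paper's route.
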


Note that \autoref{lemma:3-2-breakpoints} also allows us to show that our approximation algorithm is tight for an unbounded number of genomes. Incidentally, this also shows that the lower bound of  \autoref{eqn:lower-bound-on-pdcj} is optimal for an unbounded number of genomes (see Appendix for the proof).

\begin{observation}
\label{obs:lb-pdcj-tight}
There exists an unbounded number of genomes for which the algorithm described in proof of \autoref{lemma:3-2-breakpoints} is optimal.
\end{observation}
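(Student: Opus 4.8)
The plan is to exhibit an explicit infinite family of linear genomes $\{G_m\}_{m\ge 1}$ on which the upper bound of \autoref{lemma:3-2-breakpoints} and the lower bound of \autoref{thm:lower-bound-on-pdcj} coincide, so that both are attained simultaneously and the algorithm is forced to be optimal. Concretely, for $m\ge 1$ I would take the permutation $\pi^{(m)}$ of $[4m]$ defined, for each $0\le k\le m-1$, by $\pi^{(m)}_{4k+1}=4k+1$, $\pi^{(m)}_{4k+2}=4k+2$, $\pi^{(m)}_{4k+3}=4k+4$, $\pi^{(m)}_{4k+4}=4k+3$ (so $\pi^{(1)}=1\,2\,4\,3$, $\pi^{(2)}=1\,2\,4\,3\,5\,6\,8\,7$, and so on), and let $G_m=G_{\pi^{(m)}}$ be the associated linear genome. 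Since $\pi^{(m)}_1=1$ we have $\{0,1\}\in G_m$, so the bound of \autoref{eqn:lower-bound-on-pdcj} falls in its first branch.

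First I would count breakpoints. A direct inspection shows that block $k$ contributes exactly the two breakpoints $\{4k+2,4k+4\}$ and $\{4k+3,4k+5\}$, while all remaining $2m+1$ edges (the two in-block adjacencies per block, plus $\{0,1\}$) are adjacencies; hence $b(G_m)=2m$, and \autoref{lemma:3-2-breakpoints} gives $pdcj(G_m)\le 3m$.

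The heart of the argument is to evaluate the right-hand side of \autoref{eqn:lower-bound-on-pdcj}, for which I would compute an optimal decomposition of $U\negthinspace BG(G_m)$ via \autoref{prop:optimal-decomp-poly}: extract every trivial cycle, then read off one cycle per remaining connected component. Here the $2m+1$ adjacency edges yield $2m+1$ trivial cycles, and the leftover graph consists of the $2m$ breakpoint edges together with the grey edges on $\{4k+2,4k+3\}$ and $\{4k+4,4k+5\}$. I would then check that this graph splits into exactly $m$ connected components, one per block, on the vertices $\{4k+2,4k+3,4k+4,4k+5\}$, each of which is an alternating $2$-cycle. Thus $c^*(U\negthinspace BG(G_m))=3m+1$ and $c^*_1(U\negthinspace BG(G_m))=2m+1$, and substituting $n=4m$ into \autoref{eqn:lower-bound-on-pdcj} gives $pdcj(G_m)\ge (4m+1)+(3m+1)-2(2m+1)=3m$.

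Combining the two bounds yields $pdcj(G_m)=3m=\tfrac{3}{2}b(G_m)$. Since the algorithm of \autoref{lemma:3-2-breakpoints} returns a sorting sequence of length at most $\tfrac{3}{2}b(G_m)=3m$, while no sequence can be shorter than $pdcj(G_m)=3m$, that sequence has length exactly $3m$ and is therefore optimal; letting $m\to\infty$ gives the desired unbounded family (and, incidentally, shows that the bound of \autoref{eqn:lower-bound-on-pdcj} is attained for an unbounded number of genomes). The step I expect to be most delicate is verifying that the breakpoint edges organise into $m$ disjoint $2$-cycles rather than into fewer, larger nontrivial cycles: a coarser decomposition would decrease $c^*$ and thereby weaken \autoref{eqn:lower-bound-on-pdcj} below $3m$, so confirming the precise connected-component structure of the leftover graph (via \autoref{prop:optimal-decomp-poly} and \autoref{lemma:optimal-decompositions}) is exactly what makes the two bounds meet. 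The remaining breakpoint count and arithmetic are routine.
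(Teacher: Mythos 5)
Your proposal is correct and follows essentially the same route as the paper's own proof: an explicit infinite family with $b(G)=2m$ breakpoints, $2m+1$ trivial cycles and $m$ nontrivial $2$-cycles in the optimal decomposition of $U\negthinspace BG(G)$, so that the lower bound of \autoref{eqn:lower-bound-on-pdcj} evaluates to $3m=\tfrac{3}{2}b(G)$ and meets the upper bound of \autoref{lemma:3-2-breakpoints}. The only difference is the witness family itself (your locally swapped blocks $1\,2\,4\,3\,5\,6\,8\,7\cdots$ versus the paper's interleaved construction $1\,2\,n\,(n-1)\,5\,6\,(n-4)\,(n-5)\cdots$), which is immaterial since both yield exactly the same cycle and breakpoint counts.
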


We now turn to proving that \updcj{} is \FPT{}, as stated by the following theorem.

\begin{theorem}
\label{thm:fpt}
The \updcj{} problem is \FPT{} parameterised by~$b(G)$.
\end{theorem}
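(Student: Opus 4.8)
The plan is to build a bounded search tree whose depth and branching factor are both controlled by $b(G)$. \autoref{lemma:3-2-breakpoints} guarantees $pdcj(G)\le \lfloor 3b(G)/2\rfloor$, so it suffices to explore every sorting scenario of length at most $\lfloor 3b(G)/2\rfloor$ and return the shortest one reaching the identity; this already bounds the depth of the tree by a function of the parameter. At every node of the tree the genome reached is, by \autoref{lemma:linear-genome-0-and-max-always-in-same-path}, a single path from $0$ to $n+1$ together with some cycles, so vertex $0$ has degree $1$ and the prefix edge $\{0,v\}$ that any prefix DCJ must cut is \emph{forced}. The only freedom left at each step is the choice of the second edge $f$ and of the recombination (two options), so the whole difficulty is to show that $f$ may be chosen among a set whose size depends only on $b(G)$, rather than among all $\Theta(n)$ edges of the current genome (the latter would only give a running time of the form $n^{O(b(G))}$, which is not \FPT{}).

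The key step, and the main obstacle, is the following restriction lemma: for every non-identity genome there is an \emph{optimal} sequence of prefix DCJs in which the second edge cut by each operation is a \emph{breakpoint} of the current genome. I would prove this by an exchange argument: starting from an arbitrary optimal scenario, whenever some operation cuts an adjacency $\{w,x\}$ (with $|w-x|=1$ and multiplicity one) together with the prefix edge, I would show that this operation can be replaced by one cutting a breakpoint instead, without lengthening the scenario, using the case analysis already developed in the proof of \autoref{lemma:3-2-breakpoints} (creation of an adjacency at $v$, cycle extraction, or cycle reincorporation) to exhibit the replacing operation. If a purely breakpoint-based restriction turns out to be awkward to justify, the weaker statement that the second edge may be chosen incident to an endpoint of some breakpoint is enough and should be easier to establish; it still leaves only $O(b(G))$ candidates.

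Granting the restriction lemma, the branching analysis is routine. Cutting the prefix edge $\{0,v\}$ never removes a breakpoint (an edge containing $0$ is never a breakpoint), and the two new edges created by a DCJ are of the form $\{0,\cdot\}$ and $\{v,\cdot\}$, the first of which is again never a breakpoint; hence a single prefix DCJ changes $b(G)$ by at most one, and an operation that cuts a breakpoint as its second edge satisfies $b(G')-b(G)\in\{-1,0\}$. Consequently the number of breakpoints never increases along a breakpoint-restricted scenario, so at every node there are at most $b(G)$ choices for the second edge and, with the two recombinations, at most $2b(G)$ children. Moreover a non-identity genome always has a breakpoint to cut (every cycle, and every non-identity path, contains one), so such scenarios can always be continued until the identity is reached. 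The search tree therefore has at most $(2b(G))^{\lfloor 3b(G)/2\rfloor}$ nodes, and since each node requires only $O(n)$ time to update the genome and recompute its breakpoints, the total running time is $(2b(G))^{O(b(G))}\cdot \mathrm{poly}(n)$, establishing that \updcj{} is \FPT{} parameterised by $b(G)$.
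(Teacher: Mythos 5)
Your overall architecture is the same as the paper's (bounded search tree, depth controlled by \autoref{lemma:3-2-breakpoints}, branching controlled by restricting where the second cut may fall), but the step you yourself identify as ``the key step, and the main obstacle'' --- the restriction lemma --- is exactly the part you do not prove, and its strong form is precisely what the paper was unable to establish. The paper only proves the weaker \autoref{obs:robust-adj}: some optimal scenario never cuts a \emph{long} strip (length $>2$), and consequently its search tree must branch over breakpoints \emph{and} 2-strips (arity $4b(G)$, not $2b(G)$). This is not an oversight: the exchange argument adapted from Kececioglu--Sankoff rewrites an optimal scenario by taking the last operation that cuts a strip $S$, letting the longer surviving piece $S_A$ stand in for $S$ in the rest of the scenario; this requires $S_A$ to have length at least $2$ so that it still carries the ``orientation'' (ascending/descending) of $S$, which is what guarantees the patched scenario reassembles $S$ correctly. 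When the cut strip is a 2-strip, the surviving piece is a singleton, the orientation information is lost, and the rewriting breaks down. So your claim that every operation's second cut can be assumed to be a breakpoint is strictly stronger than what is known, and your sketch --- ``replace the operation using the case analysis of \autoref{lemma:3-2-breakpoints}'' --- does not address this: that case analysis greedily constructs \emph{some} breakpoint-reducing sequence; it says nothing about replacing one step of an arbitrary \emph{optimal} scenario by another step such that the remaining $k-1$ operations of that scenario can still be carried out (or re-patched) without increasing the total length. That scenario-patching argument is the actual content of the proof, and it is nontrivial (it occupies the appendix proof of \autoref{obs:robust-adj}).

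Your fallback claim (the second cut may be assumed incident to an endpoint of some breakpoint) would indeed suffice for the counting, since it allows $O(b(G))$ candidate edges per node, and it is in fact a consequence of the paper's \autoref{obs:robust-adj}: if no long strip is ever cut, every second cut is a breakpoint or the edge of a 2-strip, and 2-strips are framed by breakpoints. But you give no proof of this weaker statement either, and ``should be easier to establish'' is not an argument --- establishing it \emph{is} the theorem's real difficulty, everything else (forced prefix cut by \autoref{lemma:linear-genome-0-and-max-always-in-same-path}, depth bound, node count $(2b(G))^{O(b(G))}\cdot\mathrm{poly}(n)$) being routine, as you say. A small secondary slip: your parenthetical ``every non-identity path contains a breakpoint'' is false as stated (the path $0\,2\,3\,\cdots\,(n+1)$ with $1$ extracted into a cycle has no breakpoint, because edges containing $0$ are never breakpoints); the correct statement is that every non-identity \emph{genome} reachable from a linear genome contains one, because any cycle component must contain a breakpoint. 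This does not affect your counting, but the main gap above does: as written, the proposal does not prove the theorem.
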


\begin{proof}
The main idea is to use the search tree technique in a tree whose arity and depth are both bounded by a function of $b(G)$. For this, we will use the notion of {\em strip} in a genome $G$, which is defined as a maximal set of consecutive edges (in a path or a cycle of $G$) that contains no breakpoint. The \emph{length} of a strip is the number of elements it contains, strips of length $k$ are called $k$-strips;
1-strips are also called \emph{singletons}, and strips of length~$>2$ are called \emph{long strips}. 
We need the following result (see Appendix for the proof).

\begin{observation}
\label{obs:robust-adj}For any instance of \updcj{}, there always exists a shortest sorting sequence of prefix DCJs that never cut a long strip. 
\end{observation}

Now let us describe our search tree technique: at every iteration starting from $G$, guess in which location, among the available 2-strips and breakpoints, to operate the rightmost cut. Once this is done, guess among the two possibilities allowed by a DCJ to reconnect the genome. 
By definition, every strip is framed by breakpoints. 
Therefore, any genome $G$ has at most $b(G)$ 2-strips (recall that 
$\{0, x\}$ is never a breakpoint). Altogether, this shows that, at each iteration, the rightmost cut has to be chosen among at most $2b(G)$ possibilities. Because there are two ways to reconnect the cuts in a DCJ, the associated search tree has arity at most $4b(G)$. Moreover, its depth is at most $\frac{3b(G)}{2}$ 
since $pdcj(G)\leq \frac{3b(G)}{2}$ (\autoref{thm:approx}). Thus the above described algorithm uses a search tree whose size is a function of $b(G)$ only, which proves the result. More precisely, the overall complexity of the induced algorithm is in $O^*((4b(G))^{1.5b(G)})$.
\qed\end{proof}

\section{Conclusions and Future Work}

In this paper, we focused on the problem of sorting genomes by prefix DCJs, a problem that had not  yet been studied in its prefix-constrained version. We provided several algorithmic results for both signed and unsigned cases, including computational complexity, approximation and \FPT{} algorithms. Nevertheless, several questions remain open: while we have shown that \spdcj{} is a polynomial-time solvable problem, what about the computational complexity of \updcj{}? 
We were able to design a 3/2-approximation algorithm for the latter problem, which makes it to the best of our knowledge the first occurrence of a prefix rearrangement problem of unknown complexity where a ratio better than 2 has been obtained. 
Is 
it possible to improve it further, 
by making good use of the new lower bound introduced in \autoref{sec:lb}?
Whether or not this lower bound can help improve the 2-approximation ratios known for both \upr{} and \spr{} remains open. Finally, we have studied the case where both source and target genomes are unichromosomal and linear; it would be interesting to extend this study to a more general context where input genomes can be multichromosomal and not necessarily linear.

\bibliography{spire22-arxiv-clean}

\begin{thebibliography}{10}

\bibitem{akers-star}
Sheldon~B. Akers, Balakrishnan Krishnamurthy, and Dov Harel.
\newblock The star graph: An attractive alternative to the $n$-cube.
\newblock In {\em Proceedings of the Fourth International Conference on
  Parallel Processing}, pages 393--400. Pennsylvania State University Press,
  August 1987.

\bibitem{Bafna1996}
Vineet Bafna and Pavel~A. Pevzner.
\newblock {Genome Rearrangements and Sorting by Reversals}.
\newblock {\em SIAM Journal on Computing}, 25(2):272--289, 1996.

\bibitem{caprara-sorting}
Alberto Caprara.
\newblock Sorting permutations by reversals and {E}ulerian cycle
  decompositions.
\newblock {\em {SIAM} Journal on Discrete Mathematics}, 12(1):91--110
  (electronic), January 1999.

\bibitem{Chen2013}
Xin Chen.
\newblock {On sorting unsigned permutations by double-cut-and-joins}.
\newblock {\em Journal of Combinatorial Optimization}, 25(3):339--351, apr
  2013.

\bibitem{fertin:hal-00416453}
Guillaume Fertin, Anthony Labarre, Irena Rusu, Eric Tannier, and St{\'e}phane
  Vialette.
\newblock {\em {Combinatorics of Genome Rearrangements}}.
\newblock Computational Molecular Biology. {MIT Press}, 2009.

\bibitem{kececioglu-sankoff}
John Kececioglu and David Sankoff.
\newblock Efficient bounds for oriented chromosome inversion distance.
\newblock In Maxime Crochemore and Dan Gusfield, editors, {\em Combinatorial
  Pattern Matching}, pages 307--325, Berlin, Heidelberg, 1994. Springer Berlin
  Heidelberg.

\bibitem{Kotzig1968}
Anton Kotzig.
\newblock Moves without forbidden transitions in a graph.
\newblock {\em Matematický časopis}, 18(1):76--80, 1968.

\bibitem{labarre-sbpbi}
Anthony Labarre.
\newblock {Sorting by Prefix Block-Interchanges}.
\newblock In Yixin Cao, Siu-Wing Cheng, and Minming Li, editors, {\em 31st
  International Symposium on Algorithms and Computation (ISAAC)}, volume 181 of
  {\em Leibniz International Proceedings in Informatics (LIPIcs)}, pages
  55:1--55:15, Dagstuhl, Germany, 2020. Schloss Dagstuhl--Leibniz-Zentrum
  f{\"u}r Informatik.

\bibitem{LabarreC11}
Anthony Labarre and Josef Cibulka.
\newblock Polynomial-time sortable stacks of burnt pancakes.
\newblock {\em Theor. Comput. Sci.}, 412(8-10):695--702, 2011.

\bibitem{DBLP:journals/algorithmica/Pevzner95}
Pavel~A. Pevzner.
\newblock {DNA} physical mapping and alternating eulerian cycles in colored
  graphs.
\newblock {\em Algorithmica}, 13(1/2):77--105, 1995.

\bibitem{Yancopoulos2005}
Sophia Yancopoulos, Oliver Attie, and Richard Friedberg.
\newblock {Efficient sorting of genomic permutations by translocation,
  inversion and block interchange.}
\newblock {\em Bioinformatics}, 21(16):3340--3346, 2005.

\end{thebibliography}

\clearpage
\appendix
\section*{Appendix: Omitted Proofs}\label{app:omitted-proofs}

\begin{proof}[\autoref{lemma:optimal-decompositions}]
We prove both directions separately.
\begin{itemize}
    \item[$\Rightarrow$:] if 
$c^\mathscr{D}_1$ 
is not maximal, then $\mathscr{D}$ contains a nontrivial cycle from which a trivial cycle can be extracted. This yields a new decomposition $\mathscr{E}$ with 
$$
c^\mathscr{E}-2c^\mathscr{E}_1=c^\mathscr{D}+1-2(c^\mathscr{D}_1+1)<c^\mathscr{D}-2c^\mathscr{D}_1.
$$Likewise, if 
$c^\mathscr{D}-c^\mathscr{D}_1$ 
is not minimal, then $\mathscr{D}$ must contain two nontrivial cycles which can be merged into one, which decreases the quantity 
$c^\mathscr{D}-2c^\mathscr{D}_1$.

\item [$\Leftarrow$:] if $\mathscr{D}$ is not optimal, then there exists another decomposition $\mathscr{E}$ with \begin{equation}\label{eqn:simple-eqn}
c^\mathscr{D}-2c^\mathscr{D}_1> c^\mathscr{E}    -2c^\mathscr{E}_1.
\end{equation} 
We distinguish between the following three cases:
\begin{enumerate}
    \item if $c^\mathscr{D}_1=c^\mathscr{E}_1$, then $c^\mathscr{D}>c^\mathscr{E}$, so  $c^\mathscr{D}-c^\mathscr{D}_1>c^\mathscr{E}-c^\mathscr{D}_1=c^\mathscr{E}-c^\mathscr{E}_1$ and therefore $\mathscr{D}$ does not minimise the number of nontrivial cycles;
    \item if $c^\mathscr{D}=c^\mathscr{E}$, then $c^\mathscr{D}_1<c^\mathscr{E}_1$, and therefore $\mathscr{D}$ does not maximise the number of trivial cycles;
    \item if neither of the above holds, then 
    \begin{enumerate}
        \item either $c^\mathscr{D}<c^\mathscr{E}$, in which case \autoref{eqn:simple-eqn} implies $c^\mathscr{D}_1<c^\mathscr{E}_1$ and therefore $\mathscr{D}$ does not maximise the number of trivial cycles;
        
\item or  $c^\mathscr{D}>c^\mathscr{E}$, in which case either $c^\mathscr{D}_1<c^\mathscr{E}_1$,  and therefore $\mathscr{D}$ does not maximise the number of trivial cycles; or  $c^\mathscr{D}_1>c^\mathscr{E}_1$, in which case \autoref{eqn:simple-eqn} yields 
        $c^\mathscr{D}-c^\mathscr{D}_1>c^\mathscr{E}-c^\mathscr{E}_1+(c^\mathscr{D}_1-c^\mathscr{E}_1)$, which implies that $\mathscr{E}$ has fewer nontrivial cycles than $\mathscr{D}$ since $c^\mathscr{D}_1-c^\mathscr{E}_1>0$.
    \end{enumerate}
\end{enumerate}
\end{itemize}
\qed\end{proof}

\begin{proof}[\autoref{prop:new-lb-on-prd-better-than-breakpoints}]
In order to prove the inequality, we distinguish between two cases:\begin{enumerate}
\item if $\{0,1\}\not\in G$, then the lower bound from \autoref{eqn:lower-bound-on-pdcj} has value 
    \begin{equation}\label{eqn:part-i}
        n+c^*(U\negthinspace BG(G))-2c^*_1(U\negthinspace BG(G))-1.
    \end{equation} 
    Note that, by definition, each trivial cycle in $U\negthinspace BG(G)$ corresponds to an edge $\{i,i+1\}$, which is therefore an adjacency. Moreover, since $\{0,1\}\not\in G$, 0 and 1 do not form a trivial cycle, and consequently     \begin{equation}\label{eqn:part-ii}
    n=b(G)+c^*_1(U\negthinspace BG(G)).
    \end{equation} 
    Finally, since $\{0,1\}\not\in G$, $G$ is not the identity genome and we have 
    \begin{equation}\label{eqn:part-iii}
c^*(U\negthinspace BG(G))\geq c^*_1(U\negthinspace BG(G))+1.    \end{equation} 
Combining equations~\eqref{eqn:part-i}, ~\eqref{eqn:part-ii} and~\eqref{eqn:part-iii} yields the claim.

\item if $\{0,1\}\in G$, then the lower bound from \autoref{eqn:lower-bound-on-pdcj} has value 
    \begin{equation}\label{eqn:part-iv}
        n+c^*(U\negthinspace BG(G))-2c^*_1(U\negthinspace BG(G))+1.
        \end{equation} 
In that case, since 0 and 1 form a trivial cycle, we have 
\begin{equation}\label{eqn:part-v}
n+1=b(G)+c^*_1(U\negthinspace BG(G)); \end{equation} 
moreover, since $c^*(U\negthinspace BG(G))\geq c^*_1(U\negthinspace BG(G))$ always holds, combining equations~\eqref{eqn:part-iv} and~\eqref{eqn:part-v} yields the claim.
\end{enumerate}

In order to show that the gap between the lower bound of \autoref{thm:lower-bound-on-pdcj} and $b(G)$ can be arbitrarily large, consider a linear genome $G$ with $n=6p$ for an arbitrary integer $p\geq 2$. Genome $G$ corresponds to permutation $\pi$ which is the concatenation of subpermutations $\sigma_1,\sigma_2, \ldots, \sigma_p$, where for every $1\leq i\leq p$, $\sigma_i=(6i-5)\, (6i-3)\, (6i-1)\, (6i-4)\, (6i-2)\, 6i$. For instance, when $p=3$, we have $\pi=\underbrace{1\, 3\, 5\, 2\, 4\, 6}_{\sigma_1}\, \underbrace{7\, 9\, 11\, 8\, 10\, 12}_{\sigma_2}\, \underbrace{13\, 15\, 17\, 14\, 16\, 18}_{\sigma_3}$.

Formally, $G$ contains the following edges, for every $1\leq i\leq p$: $\{6i-5,6i-3\}$, $\{6i-3,6i-1\}$, $\{6i-1,6i-4\}$, $\{6i-4,6i-2\}$, $\{6i-2,6i\}$ and $\{6i,6i+1\}$; $G$ also contains edge $\{0,1\}$. 

It can be seen that $c^*_1(U\negthinspace BG(G))=p+1$, which correspond to edges $\{6i,6i+1\}$, $0\leq i\leq p$. Consequently, $b(G)=n+1-c^*_1(U\negthinspace BG(G))=5p$. Finally, $G$ has been built in such a way that, for every $1\leq i\leq p$, the elements of the interval $[6i-5;6i]$ form a cycle in $U\negthinspace BG(G)$. As a consequence, when trivial cycles are removed from $U\negthinspace BG(G)$, $p$ connected components remain, each induced by elements of $[6i-5;6i]$, $1\leq i\leq p$. Hence, as argued in proof of \autoref{prop:optimal-decomp-poly}, $G$ contains $p$ nontrivial cycles. This allows us to conclude that $c^*(U\negthinspace BG(G))=2p+1$. Altogether, since $\{0,1\}\in G$, we have that our lower bound $n+1+c^*(U\negthinspace BG(G))-2c^*_1(U\negthinspace BG(G))$ evaluates to $6p$, while as mentioned above, $b(G)=5p$, which is the sought result.
\qed\end{proof}

\begin{proof}[\autoref{obs:lb-pdcj-tight}]
Let $n=4p$ where $p\geq 2$ is any integer, and let $G$ be the 
linear genome corresponding to the following permutation $$\pi=1\, 2\, n \, (n-1)\, 5\, 6\, (n-4)\, (n-5)\, 9\, 10\, (n-8)\, (n-9)\, 13\, 14 \ldots 8\, 7\, (n-3)\, (n-2)\, 4\, 3$$ 
See \autoref{fig:example:obs:lb-pdcj-tight} for an example in the case $p=4$. 
More formally, genome $G$ contains the following edges: 
\begin{enumerate}
    \item $\{4i+1,4i+2\}$ for every $0\leq i\leq p-1$,
    \item $\{4i+2,n-4i\}$ for every $0\leq i\leq p-1$,
      \item $\{4i+3,n-4i+1\}$ for every $0\leq i\leq p-1$, 
      \item $\{4i+4,4i+3\}$ for every $0\leq i\leq p-1$, and
    \item $\{0,1\}$.
\end{enumerate}

Edge sets (1), (4) and (5) above correspond to $p+p+1$ trivial cycles, while edge sets (2) and (3) correspond to the $2p$ breakpoints that exist. Moreover, when the $2p+1$ trivial cycles are removed from $G$, in $U\negthinspace BG(G)$ we are left with a collection of $p$ 2-cycles: for each $0\leq i\leq p-1$, we have a 2-cycle whose two black edges are taken from sets (2) and (3) (namely $\{4i+2,n-4i\}$ and $\{4i+3,n-4i+1\}$), and the two grey edges are 
$\{4i+2,4i+3\}$ and $\{n-4,n-4i+1\}$.

Therefore, since $\{0,1\}\in G$, \autoref{eqn:lower-bound-on-pdcj} yields
\begin{align*}
    pdcj(G) &\ge n+1+c^*(U\negthinspace BG(G))-2c^*_1(U\negthinspace BG(G))\\
    &= 4p+1+(3p+1)-2(2p+1)\\
    &= 3p.
\end{align*}

Therefore, \autoref{eqn:lower-bound-on-pdcj} yields $pdcj(G)\geq 3p$. Moreover, since $b(G)=2p$, and by \autoref{lemma:3-2-breakpoints}, we conclude that $pdcj(G)\leq 3p$. Thus $pdcj(G)=3p$, which shows that the lower bound from \autoref{eqn:lower-bound-on-pdcj} and the algorithm described in \autoref{lemma:3-2-breakpoints} are optimal.
\qed\end{proof}

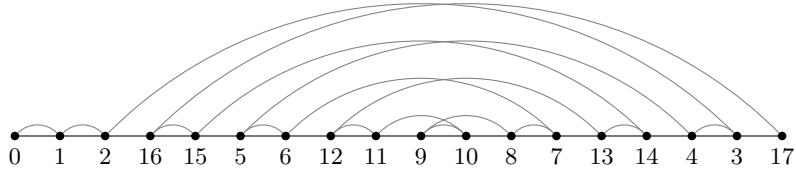
\begin{figure}[htbp]
\centering
\begin{tikzpicture}[scale=.4]
\breakpointgraph{1;2;16;15;5;6;12;11;9;10;8;7;13;14;4;3}
\end{tikzpicture}
\caption{The breakpoint graph of a genome from the family described in the proof of \autoref{obs:lb-pdcj-tight}, for $p=4$.}
\label{fig:example:obs:lb-pdcj-tight}
\end{figure}

In the following, we will observe an explicit sorting sequence for a genome, rather than the mere length of such a sequence. We refer to such sequence as a \emph{scenario}, and call it \emph{optimal} if no shorter scenario exists for the given genome. 

\begin{proof}[\autoref{obs:robust-adj}]The proof is adapted from proof of Theorem~3 in~\cite{kececioglu-sankoff}, and relies on similar arguments. However, two main differences exist in our context: first, genomes are unsigned, and thus we need to rely on {\em long} strips here, whereas in~\cite{kececioglu-sankoff} strips of length 2 or more are sufficient. Second, Theorem~3 in~\cite{kececioglu-sankoff} is concerned with reversals. The fact that we discuss prefix DCJs here implies more cases to discuss (related to DCJs only, not to the fact that operations are prefix). More precisely, observe a prefix DCJ $\delta$ in a genome $G$. Since by definition $\delta$ cuts the edge containing 0, and since by \autoref{lemma:linear-genome-0-and-max-always-in-same-path} $G$ is composed of one path $P$ (containing 0) together with cycles, there are three cases to consider: (i) the second cut of $\delta$ is in $P$ and $\delta$ is a reversal, (ii) the second cut of $\delta$ is in $P$ and $\delta$ creates a new cycle in $G$, (iii) the second cut of $\delta$ is in a cycle $C$ of $G$, and $\delta$ reincorporates $C$ in $P$.

Rather than presenting a lengthy case by case analysis, we prefer here to insist on the general arguments that ensure the proof is correct. Indeed, as it turns out, these arguments are similar whatever case we are in (cases (i), (ii) or~(iii)). 

The rationale of the proof is as follows: consider a genome, an optimal scenario $\mathcal{S}$ that sorts it, and suppose that in $\mathcal{S}$, at least one prefix DCJ cuts a long strip. Let $\delta$ be the last such prefix DCJ, let $G$ be the genome on which $\delta$ is applied, and let $S$ be the long strip that it cuts. We will show the following property, that we call $\mathcal{P}$ for convenience: it is always possible to find an alternate scenario $\mathcal{S'}$ that sorts $G$, is of same length as $\mathcal{S}$, and does not cut any long strip. In that case, it is possible to apply $\mathcal{P}$ to every prefix DCJ that cuts a long strip, from the last to the first one, and altogether, the observation is proved.

Now let us describe the alternate scenario. Recall that $\delta$ cuts strip $S$, and suppose $S$ is split into $S_A$ and $S_B$. Wlog, let us suppose that the longest substrip between $S_A$ and $S_B$ is $S_A$. Since $S$ is a long strip, we have that $S_A$ is of length at least 2, and in particular we know whether it is ascending or descending (i.e., the successive elements in $S_A$ are in increasing or decreasing order). We then replace $\delta$ by the prefix DCJ $\delta'$ that does not cut $S$, replaces $S_A$ by $S$ and deletes $S_B$ from $G$. The remaining prefix DCJs in scenario $\mathcal{S}$, among which none of them cuts a long strip, are adapted in $\mathcal{S'}$ as follows: every time $S_A$ is involved in an operation, replace it by $S$; besides, delete $S_B$ from all genomes between $G$ and the identity.

The fact that our alternate scenario $\mathcal{S'}$ also sorts $G$ relies on the following argument: in the original scenario $\mathcal{S}$, strip $S$ will be eventually grouped again, either as $S$ (if it was ascending) or as its reverse (if it was descending), so as to reach the identity genome. We just need to make sure that after the last prefix DCJ in scenario $\mathcal{S'}$, strip $S$ has the same orientation as in $\mathcal{S}$. However, this is the case since $\mathcal{S}$ and $\mathcal{S'}$ agree on $S_A$, which is of length at least 2, and thus carries information on its ``orientation" (ascending or descending). Thus strip $S$ in $\mathcal{S'}$ is reversed the same number of times as $S_A$ in $\mathcal{S}$, which is the sought property.

One final specificity needs to be taken into account. Indeed, it could happen that, in $\mathcal{S'}$, a prefix DCJ $\delta_1$ which did not cut a strip in $\mathcal{S}$ may now cut a strip. In that case, it suffices to observe that $\delta_1$ occurs strictly after $\delta$. Thus we can reproduce our previous argument to $\delta_1$, until no prefix DCJ cuts a long strip -- which always happens since at distance is 0 or 1 to the identity genome, trivially no strip is cut. 
\qed\end{proof}

\end{document}